\newtheorem{prop}{Property}{\bfseries}{\itshape}
\def \P {{\mathcal P}}
\def \Rb {{\mathcal R}}
\def \X {{\mathcal X}}
\def \PR {{\mathcal P}_{\Rb}}
\def \Pr {{\mathbb P}}
\newcommand{\REMOVED}[1]{}
\begin{document}

\title{\textbf{Random input helps searching predecessors}
}

\author{D. Belazzougui\inst{1} \and A. C. Kaporis\inst{2} \and P. G. Spirakis \inst{3}
}

\institute{LIAFA, Univ. Paris Diderot - Paris 7, 75205 Paris Cedex 13, France. Email:\texttt{dbelaz@liafa.jussieu.fr}
\and
Department of Information and
Communication Systems Engineering, University of the Aegean,
Karlovassi, Samos, 83200. Email:\texttt{kaporisa@gmail.gr}
\and
Department of Computer Engineering
and Informatics, University of Patras, 26500 Patras, Greece; and
R.A.~Computer Technology Institute, N.~Kazantzaki Str, Patras
University Campus, 26500 Patras, Greece. Email:
\texttt{spirakis@cti.upatras.gr}
}


\setcounter{page}{1}
\pagestyle{empty}
\pagestyle{plain}
\maketitle


\begin{abstract}
We solve the dynamic Predecessor Problem with high probability (whp) in
constant time, using only $n^{1+\delta}$ bits of memory, for
any constant $\delta > 0$. The input keys are random  wrt a wider class of the
well studied and practically important class of $(f_1, f_2)$-smooth
distributions introduced in \cite{and:mat}.
It achieves $O(1)$  whp amortized  time. Its worst-case time is $O(\sqrt{\frac{\log n}{\log \log n}})$.
Also, we prove whp $O(\log \log \log n)$ time using only $n^{1+ \frac{1}{\log \log n}}= n^{1+o(1)}$ bits.
Finally, we show whp $O(\log \log n)$ time using $O(n)$ space.

\smallskip

\noindent {\bf Keywords:} dynamic data structure, predecessor search, random input, probabilistic analysis.
\end{abstract}
%
%
\section{Introduction}
\label{Sec:Intro}
\vskip 0.05 cm
\noindent
{\bf  The problem.} Suppose that $X$ is a dynamic\footnote{It
supports updates as insertions of new and deletions of existing
keys, otherwise $X$ is static.} file of $n$ keys each represented by
$\ell \leq b$ bits of memory, drawn from a universe $U$ consisting
of $|U|= 2^{b}$ possible words, in a unit-cost RAM with word length
$b= \log |U|$. These $\ell$ bits are needed for representing  all
$n$ keys of $X$, so, $\ell$ must satisfy $\log |X|= \log n \leq \ell
\leq b = \log |U|$.
The goal is to wisely (preprocess $X$) and efficiently store  (total
bits for the $n$ keys of file $X$ to be as close to  $n$, while the
$\ell$ bits per key in $X$ to be as close to the minimum amount
$\log n$) file $X$ in a data structure in a way that important
queries are performed in the most efficient query time (as close to
$O(1)$). ``Time'' refers to the number of probed (compared) memory
words until the query is complete.
Perhaps the most basic is the {\em Membership} ${\tt Memb} (x, X)$
query: determine if $x \in X$, for an arbitrary $x \in U$.
The ${\tt Del}(y, X)$ query: delete $y \in X$ is reduced to ${\tt
Memb} (y, X)$ in $O(1)$ time.
Extensively studied \cite{BF02} and more complex to perform is the
{\em Predecessor} ${\tt Pred} (x, X)$ query: determine the largest
$y \in X$ such that $y \leq x$,  for an arbitrary $x \in U$.
Finally,  the ${\tt Insrt}(y, X)$ query inserts $y \in U$ into file
$X$.
\vskip 0.05 cm
\noindent
{\bf Complexity.}
\label{Par:Complexity}
%
%
It is known that membership queries on $X$ require $O(1)$ time and
linear space $O(n)$ \cite{FKS84} and \cite{Pag00} via exploiting the
power of double hashing.
However, searching for predecessors is substantially more complex,
as manifested by an early ground breaking result of Ajtai
\cite{Ajt88} proving the impossibility of achieving constant query
time using only polynomial wrt $n= |X|$ space. In more words, Ajtai
proved that if the memory bits per word are $\ell= O(\log n)$, that
is, only ``short'' words are available, and if only $n^{O(1)}$ such
short words  can be used for representing a set $|X|= n$, then, it
is impossible to perform predecessor search in constant time.
In sharp contrast, if ``long'' words of length $\ell > n$ are
available, Ajtai, Fredman, and Komlos \cite{AFK84} proved that only
$O(n)$ ``'long' words suffice for finding predecessors in constant
time on $|X|= n$.
A flurry of subsequent papers \cite{Mil94,BF02,PatTho06,PatTho07}
were motivated and built on the work of \cite{Ajt88}.
The result \cite{AFK84} when combined with \cite{Ajt88}, illustrates
how crucially limits the query performance the total available space
when measured by the overall bits.
Thus, it is important to assume that the $\ell$ bits per key of file
$X$ are very small wrt file size $n$, otherwise the space
requirement would be huge.
Quoting from \cite[Sect. 2]{BF02} ``...{\em The most interesting
data structures are those that work for an arbitrary universe whose
elements can fit in a single word of memory (i.e. $|U| \leq 2^b$)
and use a number of words that is polynomial in $n$, or ideally
$O(n)$} ...''.
%
%
%
%
%
\vskip 0.05 cm
\noindent
{\bf Motivation.} A key ingredient of the approaches in
\cite{Ajt88,Mil94,BF02,PatTho06,PatTho07} was to exhibit a ``bad''
distribution of inputs that, when combined with an also ``bad''
distribution of queries, ``ill advised''  the data structure to high
query time. Notably, such bad performance was unavoidable even by
allowing the (randomized) data structure to take random decisions
\cite{PatTho07} of rejecting excessively/unconveniently long
queries.
But, at this point an important question, that highly motivates our
work, arises: is it possible to circumvent these lower bounds by
just considering more ``natural'' and as general as possible real
world  distributions of input keys?
As Frieze and Reed  suggest \cite{FriRee98}, instead of analyzing the worst-case (but rare) performance of algorithms, it is often more telling to analyze their performance on typical (highly probable) instances.
Clearly, a natural goal is the algorithm to remain fast on the overwhelming
majority of random inputs.
Perhaps, the idea of exposing information from {\em uniform} input
distribution as to facilitate the performance of a data structure
goes back at least to Peterson's  method of {\em Interpolation
Search} (${\tt IS}$) \cite{Pet57}: recursively drive each probe for locating
${\tt Pred}(y,X)$ towards to (an estimation of) the expected
location of $y$ in file $X$.
He announced $O(n)$ such real (but ``long'') words suffice to find predecessors in order
of $\log \log n$ expected time.
Even for the restricted case of uniform (or, for transforming to uniform an arbitrary but {\em known}) continuous distribution, the analysis of ${\tt IS}$ attracted a lot of interest \cite{Gon77,GRG80,PerRei77,PIA78,Pri71,YY76}.
In Knuth's infamous list of top searching
problems, it was placed as 1st the ``{\em Average case of ${\tt IS}$}'',
emphasizing the importance of the typical (most likely)
data structure performance on (most general) random input
distributions, see Sedgewick's list \cite{Sed97}.
However, uniform input reals is a quite restrictive (strong)
assumption for real world applications, since these are not easy to
produce or handle in all practical applications. Willard \cite{W85}
made a breakthrough by introducing unknown {\em regular} input distributions
on reals (once more). In words, the probability density can
fluctuate over a given range, but, its first derivative remains
bounded, thus, bounding the estimated location error.
Mehlhorn and Tsakalidis \cite{meh:tsa} studied the dynamic version
for the larger class: {\em unknown} {\em $(\lceil n^{\alpha}\rceil, \sqrt{n})$-smooth} continuous distributions,
permitting more control to the distribution of the real keys in each
real subinterval. Intuitively, now 1st derivative may be unbounded. This class contains the class regular distributions, notably regulars correspond to $(\lceil n^{\alpha}\rceil, O(1))$-smooth ones. Finally, the state of the art of the most general unknown input distributions are the $(f_1, f_2)$-smooth ones (Def. \ref{Def:smooth-real}, in App. \ref{app:A0}) considered by Andersson and
Mattson \cite{and:mat}, for even more general $f_1, f_2$ parameters.
Recent advances in database theory and applications \cite{Gra06}
indicate the importance of ${\tt IS}$ variants on boosting the search
performance on the most general random (skewed) input distributions.

However, all these and many other ${\tt IS}$ based approaches (variants)
\cite{and:mat,Gon77,GRG80,KMSTTZ03,meh:tsa,PIA78,PerRei77,Pet57,W85,YY76}
raised the following two crucial concerns about their {\em space}
and {\em time} efficiency.  First, concerning space, in order to
facilitate their probabilistic analysis, all these approaches
\cite{and:mat,Gon77,GRG80,KMSTTZ03,meh:tsa,PIA78,PerRei77,Pet57,W85,YY76}
have a common suffering assumption. They are limited to random {\em
real} keys of {\em infinite} word length, thus, it is questionable
how to bound the total bits of space by a polynomial of $n= |X|$.
Frankly, it is not obvious to store $n$ real numbers in just $O(n)$
bits while achieving fast query time.
More on this, although these approaches require $O(n)$ real (but
``long'') words of memory, it is quite telling to recall our
discussion in the previous paragraph, about how the predecessor time
is boosted to a constant by just conveniently assuming ``long''
words of memory in \cite{AFK84}, opposed to the corresponding
superconstant lower bound proved in \cite{Ajt88} by assuming
``short'' ones.
Also, real keys do have theoretical significance, but, in all
applications it is highly not trivial to produce or handle.
Finally, concerning running time, these approaches
\cite{and:mat,Gon77,GRG80,KMSTTZ03,meh:tsa,PIA78,PerRei77,Pet57,W85,YY76}
cannot run beyond the $\Theta (\log \log n)$ expected time  for so
general as the smooth input distributions in \cite{and:mat}.
%
%
\vskip 0.05 cm
\noindent
{\bf Our result.} Under the most natural and even wider assumptions about
the random input keys that have been considered up to today
\cite{and:mat}, we manage to alleviate all lower bounds for the
dynamic predecessor search problem
\cite{Ajt88,Mil94,BF02,PatTho06,PatTho07}, by proving constant time
with high probability (whp), as $n$ grows large, thus improving over
all approaches in
\cite{and:mat,Gon77,GRG80,KMSTTZ03,meh:tsa,PIA78,PerRei77,Pet57,W85,YY76}
that achieved $\Theta (\log \log n)$ expected running time for such
wide random class of inputs. The fine details of our dynamic data
structure exhibit that it whp achieves {\em constant} predecessor
time working with only $O(n)$ ``short'' memory words of length at
most $O(\log n)$, just consuming an overall of $n^{1+ \delta}$
number of bits, for any constant $\delta
> 0$.
%
%
%
Also, our structure achieves $O(\log \log \log n)$ whp query time using only $n^{1+ \frac{1}{\log \log n}}= n^{1+o(1)}$ space, which is an exponential improvement over the pervious ${\tt IS}$ results.
Finally, we show whp $O(\log \log n)$ time using $O(n)$ space.
In this work we mainly demonstrated the asymptotic efficiency of the
data structure. The tuning of constants for practical
purposes is a matter of further work.

One could claim that just a simple trie with fan-out $n^\delta$ yields $O(1/\delta)$ time and linear space.
But note that this cannot (at least to the best of our knowledge) implement predecessor search
in constant time.
The reason is that, for the widest class of random input studied up to now, we managed to reduce whp the arbitrary universe size $|U|$ to the significantly smaller size $2^{C \log n}$, but, constant $C$ is $>1$. Hence the lower bounds in \cite{PatTho06} rule out any hope for linear space.
In fact, our way (Sect. \ref{Sec:Storing R into a FID}) can be viewed as an efficient VEB trie variant of \cite[Lem. 17]{PT06b}, which improves time to $O(\log (1/\delta))$.

Theorem \ref{Th:main result} below depicts the performance characteristics of our dynamic data structure (${\tt DS}$). These stem from the combination of many properties, formally proved in Section \ref{Sec:Properties of the upper static data}. It is more helpful to visit Section \ref{Sec:The data structure} first, illustrating the basic parts of {\tt DS}. It also presents the underlying intuition that governs our probabilistic analysis. Also, it suggests a way to follow the line of detailed proofs in Section \ref{Sec:Properties of the upper static data}.
\begin{theorem}
{\tt DS} initially stores a file $X$ of $n$ keys drawn from an
unknown $(n^{\gamma}, n^{\alpha})$-smooth input\footnote{This is superset from the class considered in \cite{and:mat}, see details in Def. \ref{Def:smooth-real} \& \ref{Def:smooth} and in Sect. \ref{subSec:whp each part of P contains log n input keys}} distribution $\Pr$, with $0 < \alpha < 1$ and constant $\gamma > 0$. For any positive constant $\delta > 0$ it has the following properties:
%
%

\noindent
{\bf 1.}
%
%
%
Requires $O(n^{1+\delta})$ space and time to built.
%
%

\noindent
{\bf 2.}
Supports $\Omega (n^{1+\delta})$ updates as: ${\tt Pred}(y, X)$, ${\tt Del}(y, X)$, ${\tt Insrt}(y,
X)$, ${\tt Memb}(y, X)$. We assume that  ${\tt Insrt}(y,
X)$ and ${\tt Del}(y, X)$ queries can occur in arbitrary order, only satisfying that $\Theta (n)$ keys are
stored per update step, while the rest queries are unrestricted.
%
%

\noindent
{\bf 3.}
Supports $O(1)$ query time with high probability (whp).
%
%

%
\noindent
{\bf 4.}
Supports $O(\log \log \log n)$ query time whp, using only $n^{1+ \frac{1}{\log \log n}}= n^{1+o(1)}$ space (built time).

\noindent
{\bf 5.}
Supports  $O(\log \log n)$  query time whp, using $O(n)$ space (built time).

\noindent
{\bf 6.}
Supports $O(1)$ whp amortized update time.
%

\noindent
{\bf 7.}
Supports $O(\sqrt{\frac{\log n}{\log \log n}})$ worst-case query time.
%
%
\label{Th:main result}
\end{theorem}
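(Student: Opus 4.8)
The plan is to establish Theorem \ref{Th:main result} by assembling it from the collection of structural properties proved in Section \ref{Sec:Properties of the upper static data}, rather than by a single monolithic argument. The backbone of the construction is a two-level scheme: a top-level partition $\P$ of the universe into buckets, where the $(n^\gamma, n^\alpha)$-smoothness of $\Pr$ guarantees whp that every bucket receives $\Theta(\log n)$ of the $n$ input keys (this is exactly the content flagged in Section \ref{subSec:whp each part of P contains log n input keys}), and a bottom-level structure handling each bucket. The first move is to record this balancing property precisely: smoothness with parameters $(n^\gamma, n^\alpha)$ lets us choose $\Theta(n/\log n)$ interpolation-style boundaries so that a Chernoff bound over the buckets yields the $\Theta(\log n)$ per-bucket occupancy whp. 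From this, part (1) — the $O(n^{1+\delta})$ build time and space — follows by charging $O(n^\delta)$ space to each of the $O(n)$ small structures (tuning the trie fan-out / van Emde Boas stratification depth to the constant $\delta$), and part (2) follows because the same occupancy invariant is maintained under any update sequence that keeps $\Theta(n)$ keys live, with periodic rebuilding absorbed over $\Omega(n^{1+\delta})$ operations.

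For the query-time claims (3)–(7), the key step is to reduce, whp, an arbitrary universe of size $|U|$ down to a polynomially small effective universe of size $2^{C\log n}$ for some constant $C>1$ — this is the reduction discussed around Section \ref{Sec:Storing R into a FID} and built on \cite[Lem. 17]{PT06b}. Once a bucket's keys live in a universe of size $n^{O(1)}$, I would use the standard tradeoff for predecessor search on a polynomial universe: a trie of fan-out $n^{\epsilon}$ gives constant depth (part 3, with the constant depending on $\delta$ and $C$), a van Emde Boas layout gives $O(\log\log)$ of the universe-bit-length and hence $O(\log\log n)$ time in linear space (part 5), and an intermediate stratification — doubly-exponential fan-out, so $\log\log\log n$ levels — gives part (4) at the cost of $n^{1/\log\log n}$ space. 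Part (7), the $O(\sqrt{\log n/\log\log n})$ worst-case bound, is the deterministic fallback: when the probabilistic universe-reduction fails (a low-probability event), one reverts to a classical worst-case-optimal predecessor structure (Beame–Fich / fusion-tree hybrid \cite{BF02}) on the original word length $b=O(\log n)$, whose optimal bound is $\Theta(\sqrt{\log n/\log\log n})$; since this branch is taken only off the high-probability event, it does not affect (3)–(5). Part (6), the $O(1)$ amortized update time, combines the $O(1)$ whp query time (insert/delete reduce to membership plus local update within a $\Theta(\log n)$-size bucket) with an amortization of the global rebuild cost over the guaranteed $\Omega(n^{1+\delta})$ update horizon from part (2).

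The main obstacle — and the step I expect to require the most care — is the whp universe reduction together with its interaction with \emph{dynamic} updates. Showing that interpolation-driven hashing/bucketing shrinks the universe to $2^{C\log n}$ with high probability uses the smoothness hypothesis in an essential way, and one must control the constant $C$ carefully, because $C>1$ is precisely what forces $n^{1+\delta}$ (not $O(n)$) space in part (3) via the lower bound of \cite{PatTho06}; getting $C$ exactly right is what ties the space exponent to $\delta$. Compounding this, the whp statements must survive $\Omega(n^{1+\delta})$ adversarially ordered updates: a union bound over that many steps would be fatal, so the argument has to be phrased so that the failure probability per logical phase is $n^{-\omega(1)}$ and rebuilds are frequent enough (every $\Theta(n)$ updates, say) that only polynomially many independent whp-events need to hold simultaneously. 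Once these two points are nailed down, the remaining items are routine combinations of the per-bucket data-structure analyses collected in Section \ref{Sec:Properties of the upper static data}.
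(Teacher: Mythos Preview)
Your high-level picture matches the paper's---a two-tier structure, a whp universe reduction to $2^{C\log n}$ via smoothness, the VEB variant of \cite[Lem.~17]{PT06b} for the tradeoffs in parts (3)--(5), and a Beame--Fich fallback for part (7)---but you have inverted where the universe reduction lives, and this is a genuine gap. In the paper there are \emph{two distinct} partitions playing different roles. The bucket boundaries $\Rb$ are \emph{random} order statistics sampled from the input (the $\alpha(n)n$-apart keys of the sorted file $\X$), giving $\rho=\Theta(n/\log n)$ buckets each holding $\Theta(\log n)$ keys whp. Separately, a \emph{deterministic} equal-width partition $\P$ of $U$ into $n^{C_1}$ pieces is used only to \emph{encode} each $r_i\in\Rb$ with $C_1\log n$ bits (Th.~\ref{Th:each part contains at most one key}, via Th.~\ref{Th:bits for reps} and Lemma~\ref{thm:red-dots}). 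The universe reduction and the resulting predecessor structure live at the \emph{top} level: one stores the $\P$-encodings $\widetilde{\Rb}$ in the static {\tt sDS} so that, given a query $y$, the containing bucket $i_y$ is found in $O(1)$ time. Inside each bucket the keys are still full $b$-bit words; the bottom level is a $q^*$-heap (Cor.~\ref{Cor:Willard q* heap}), which handles $O(\log n)$ arbitrary-word keys in $O(1)$ time with no universe reduction at all. Your sentence ``once a bucket's keys live in a universe of size $n^{O(1)}$'' has the levels backwards; if you reduce only within buckets you have not explained how to route $y$ to its bucket in $O(1)$ time, which is the whole problem.

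Two smaller discrepancies. First, there are no periodic rebuilds: {\tt sDS} is built once in the preprocessing phase and stays \emph{static} for all $\Omega(n^{1+\delta})$ operational steps (Remark~\ref{Rem:Static-FID-part-of-DS}); part (6) simply divides the one-time $O(n^{1+\delta})$ build cost over that many operations. Second, the worst-case guarantee in part (7) is not a single global fallback triggered ``when the probabilistic reduction fails''; the paper keeps two auxiliary dynamic Beame--Fich structures, $B_1$ over the indices of non-empty buckets and $B_2$ holding overflow keys that exceed the $q^*$-heap capacity, and consults them only on the rare events that a bucket is empty or overflowing (App.~\ref{app:D}). Finally, $b=\log|U|$ is \emph{not} assumed to be $O(\log n)$ anywhere; the $\sqrt{\log n/\log\log n}$ branch of the Beame--Fich bound holds for arbitrary word length, so your remark ``on the original word length $b=O(\log n)$'' is mistaken though harmless for the conclusion.
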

\begin{proof}
%
%
%

\noindent
{\bf 1.}
Sect. \ref{SubSec:Implementation of the idea} describes how ${\tt DS}$ is built. Sect. \ref{Sec:Storing R into a FID} proves in Th. \ref{Th:GORR09} the required time and space to construct the upper static {\tt DS} part. Finally, Sect. \ref{Sec:Implementing buckets of R as q* heaps} proves in Corol. \ref{Cor:Willard q* heap} the corresponding time and space required by the lower dynamic {\tt DS} part.
%
%

\noindent
{\bf 2.}
Sect. \ref{subSec:the number of stored keys per update step} describes in Property \ref{Prop:bounding the number of keys in X per step}  the sufficient conditions for performing $\Omega (n^{1+\delta})$ updates on the lower dynamic {\tt DS} part.
%
%

\noindent
{\bf 3.}
When handed an arbitrary query, we first proceed this query to the upper static {\tt DS} part, as Sect. \ref{Sec:Storing R into a FID} describes. Therein  Theorem \ref{Th:GORR09} proves that the handed query is driven whp in $O(1)$ time to the lower dynamic {\tt DS} part. Finally, Sect. \ref{Sec:Implementing buckets of R as q* heaps} describes how the handed query is implemented and Corollary \ref{Cor:Willard q* heap} proves that query is whp answered in $O(1)$ time.
%

%
\noindent
{\bf 4, 5.}
These are proved in Corollary \ref{Cor:logloglogn in o(n) bits}.

\noindent
{\bf 6.}
Sect. \ref{Sec:whp amortized time} proves the $O(1)$ whp amortized time.
%
%
%

\noindent
{\bf 7.}
Sect. \ref{Sec:amortized time} proves the $O(\sqrt{\frac{\log n}{\log \log n}})$ worst-case time.
%
%
\end{proof}
\section{Related work, smooth input, definitions}
\label{Sec:Previous work}
%
%
%
%
\noindent
{\bf Related word.}
The classical Van Emde Boas data structure~\cite{BKZ77} achieves $O(\log\ell)$  query time in exponential space, but the space can be reduced to linear when hashing is used~\cite{W83}.
%
%
Fredman and Willard~\cite{FW90,FreWil93} proposed the Fusion tree which achieves linear space with $O(\frac{\log n}{\log\ell})$ query time. When combining the fusion and Van Emde Boas bounds one gets a query time bound $O(\min(\log\ell,\sqrt{\log n}))$.
Capitalizing on this, Beame and Fich~\cite{BF02} were the first to dive below the van Emde Boas/Fusion tree barrier
, that is they improved the query time to $O(\min(\frac{\log\ell}{\log\log\ell},\sqrt{\frac{\log n}{\log\log n}}))$ while consuming quadratic in $n$ space.
%
%
The exponential tree~\cite{AT07} is a generic transformation which transforms any static polynomial space predecessor data structure with polynomial construction time into a dynamic linear space predecessor data structure with efficient updates. Combining the Beame and Fich data structure with exponential search tree produces a data structure with linear space and $O(\min(\log\log n\cdot \frac{\log\ell}{\log\log\ell},\sqrt{\frac{\log n}{\log\log n}}))$ query/update time. Finally Patrascu and Thorup~\cite{PatTho06} explore in depth every fine detail of the time versus space
interplay on a static set of $n$ keys of $\ell$ bits each.
%
%
In particular they have shown that for $\ell=O(\log n)$, that is for the case of a polynomial universe $|U|= 2^{O(\log n)}$, constant query time is possible and the optimal space is $O(n^{1+\delta})$, where $\delta$ is any positive constant.
In our case, $|U|$ can be arbitrarily larger than \cite{PatTho06} above, but we managed to reduce it to a polynomial one, via exploiting the smoothness of the random input.
Constant expected search time was proved for ${\tt IS}$ in \cite[Cor. 12]{and:mat} for random real input keys of a bounded probability density, formally for the restricted class of $(f_1= n, f_2= 1)$-smooth densities (Def. \ref{Def:smooth-real} below).
The work in \cite{KMSTTZ03} removed the dependence on $n$ for the expected search time by using finger search and obtaining $O(\log \log d)$ time, with $d$ the distance of the targed key from the key pointed by the finger.
The work in  \cite{KMSTTZ06} is the first that extends the analysis of ${\tt IS}$ to discrete keys.
The work \cite{DJP04} removes the independence of input keys.
%
%
We give the definition of smoothness for discrete input (see Def. \ref{Def:smooth-real} in App. \ref{app:A0} for the continuous case).
\begin{definition}[\cite{KMSTTZ06}]
An unknown {\em discrete} probability distribution $\Pr$ over the discrete keys  $x_1, \ldots, x_N$ of the universe $[a,~ b]$ is $(f_1, f_2)$-smooth\footnote{Intuitively, function $f_1$ partitions an arbitrary subinterval
$[c_1,c_3] \subseteq [a,b]$ into $f_1$ equal parts, each of length
$\frac{c_{3}-c_{1}}{f_1}= O(\frac{1}{f_1})$; that is, $f_1$ measures
how \emph{fine} is the partitioning of an arbitrary subinterval.
Function $f_2$ guarantees that no part, of the $f_1$ possible, gets
more probability mass than $\frac{\beta \cdot f_2}{n}$; that is,
$f_2$ measures the \emph{sparseness} of any subinterval
$[c_{2}-\frac{c_{3}-c_{1}}{f_1},c_2] \subseteq [c_1,c_3]$. The class
of $(f_{1},f_{2})$-smooth distributions (for appropriate choices of
$f_1$ and $f_2$) is a superset of both regular and uniform classes
of distributions, as well as of several non-uniform classes
\cite{and:mat,meh:tsa}. Actually, {\em any} probability distribution
is $(f_{1},\Theta(n))$-smooth, for a suitable choice of $\beta$.
}, if the exists a constant $\beta$ such that for all constants $a
\leq c_{1} < c_{2} < c_{3} \leq b$ and integers $n$,
%
for the conditional probability of a $\Pr$-random key $y$ to hold:
%
%
%
$$
P\left[c_{2}-\frac{c_{3}-c_{1}}{f_{1}(n)} \leq y \leq c_{2} | c_1 \leq
y \leq c_3\right]= \sum_{x_i= c_{2}-\frac{c_{3}-c_{1}}{f_{1}(n)}}^{ c_{2}}
{\Pr[c_{1},~ c_{3}](x_i)}
   \leq \beta \frac{ f_{2}(n)}{n}
$$
%
%
%
where $\Pr[c_{1},c_{3}](x_i)= 0$ for $x_i <c_1$ or $x_i > c_3$, and
$\Pr[c_{1},~c_{3}](x_i)=\Pr(x_i)/p$ for $x_i \in [c_1,~ c_3]$ where
$p=\sum_{x_i \in [c_1,~c_3]}\Pr(x_i)$.
\label{Def:smooth}
\end{definition}
%
%
%
%
\section{Data structure, underlying probabilistic intuition, proof
plan }
\label{Sec:The data structure}
%
%
%
%
\subsection{Implementation outline}
\label{SubSec:Implementation of the idea}
{\tt DS}  operates in two phases. The {\em preprocessing phase} samples $n$ random input
keys and fine tunes some {\tt DS} characteristics. The {\em
operational phase} consists of $\Omega(n^{1+\delta})$ update steps, each taking $O(1)$ time whp.
\vskip 0.05 cm
\noindent
{\bf Preprocessing phase:} $n$ random keys from the universe $U=
\{0, \ldots, 2^{|b|}\}$ are inserted to the dynamic file $X$, drawn
wrt an unknown $(f_1, f_2)$-smooth
distribution (Def. \ref{Def:smooth}). We order increasingly
the keys in $X$ as:
\begin{eqnarray}
\X= \{x_1, \ldots, x_n\}
\label{Def:X-ordered-increasingly}
\end{eqnarray}
We select $\rho-1 < n$ representative keys (order statistics):
\begin{eqnarray}
\Rb= \{r_0, \ldots, r_{\rho}\}
\label{Def:the representative keys in R}
\end{eqnarray}
out of the ordered file $\X$ in (\ref{Def:X-ordered-increasingly}):
 for $0<i< \rho$ (bucket) representative  key $r_i \in \Rb$ in (\ref{Def:the
representative keys in R}) is the key\footnote{We defer the
(technical) tuning of the value $\alpha (n) n = \Theta (\log n)$ in
the proof of Th. \ref{thm:red-dots}. Intuitively, the keys in $\Rb$ are the $\alpha (n) n$-apart keys in $\X$, except from $r_0 = 0$ and $r_{\rho}= 2^{b}$, being the endpoints of the universe $U$.} $x_{i\alpha (n)n} \in \X$
in (\ref{Def:X-ordered-increasingly}), with $r_0 = 0, r_{\rho}= 2^{b}$. We store an encoding
(indexing) of $\Rb$ into a VEB data structure (Sect. \ref{Sec:Storing R into a FID}), which consists the {\em static upper part} (${\tt sDS}$)  of our {\tt DS} (App. \ref{app:A}, Rem.
\ref{Rem:Static-FID-part-of-DS}).
Each pair of consecutive keys $r_i, r_{i+1} \in \Rb$ depicted in
(\ref{Def:the representative keys in R}) points to the $i$th bucket
with endpoints $[r_i, r_{i+1})$, for each $i \in [\rho-1]$. In turn,
the $i$th such bucket is implemented as $q^*$-heap \cite{Will92}. These  $q^*$-heaps consist the {\em dynamic lower part} (${\tt dDS}$) of our {\tt DS} (App. \ref{app:A}, Rem.
\ref{Rem:Dynamic-q*heaps-part-of-DS}), which can be
accessed (Sect. \ref{Sec:Implementing buckets of R as q* heaps}) by any query wrt an arbitrary key $y \in U$ in $O(1)$ time
only through the upper static ${\tt sDS}$ part of our ${\tt DS}$.

\noindent
{\em Properties:}
Observe that ${\tt sDS}$ may takes $O(n^{1+\delta})$ built time, but this is
absorbed by the $\Omega(n^{1+\delta})$ update steps of the
operational phase (App. \ref{app:A}, Rem. \ref{Rem:Static-FID-part-of-DS}), thus
safely inducing low whp amortized cost.
By a probabilistic argument (Th. \ref{Th:each part contains at most one key}), each representative $r_i \in \Rb$ whp can
be {\em uniquely} encoded with $C_1\log n$ bits, $C_1= O(1)$, yielding overall ${\tt sDS}$
space $O(n^{1+ \delta})$ for any constant $\delta >0$, while
performing ${\tt Pred} (y, \Rb)$ in $O(1)$ time for any key $y \in
U$ (Sect. \ref{Sec:Storing R into a FID}).
%

%
%
%
\vskip 0.05 cm
\noindent
%
{\bf Operational phase:} consists of
$\Omega(n^{1+\delta})$ queries (App. \ref{app:A}, Rem.
\ref{Rem:Static-FID-part-of-DS})
as: ${\tt Pred}(y, X)$, ${\tt Del}(y, X)$, ${\tt Insrt}(y, X)$,
${\tt Memb}(y, X)$ (Sect. \ref{Sec:Intro}).
Each query wrt key $y \in U$ is
landed in $O(1)$ time by the {\em static} ${\tt sDS}$
 of $\Rb$ defined in (\ref{Def:the representative keys in R})
towards to the appropriate {\em dynamic}  $q^*$-heap pointed by the
${\tt Pred} (y, \Rb) \equiv i_y$th bucket.

\noindent
{\em Properties:}
The only restriction is that ${\tt Insrt}(y, X)$ and ${\tt Del}(y,
X)$ queries can occur in an arbitrary (round-robin) round but must
satisfy that $\Theta (n)$ keys (Property \ref{Prop:bounding the number of keys in X per step}) are stored per update step, while the rest queries are unrestricted.
Since the ${\tt Pred} (y, \Rb) \equiv i_y$th such landed bucket is implemented as $q^*$-heap, each update wrt $y \in
U$ takes $O(1)$ time by the corresponding $q^*$-heap built-in
function. The only concern (App. \ref{app:A}, Rem. \ref{Rem:q*heap-load-remains-logn})
is to show that the $i_y$th bucket load whp remains $\Theta
(\log n)$ per update step in file $X$ (Th. \ref{Th:bits for reps} and Lem. \ref{thm:red-dots}).
\subsection{Bird's eye view of the idea}
\label{Sec:The idea of the data structure}
{\bf Uniform ``toy'' input.} For a toy warmup, let us assume the
uniform input distribution over $U$. We induce a partition of the
universe $U$ into $\rho < n$ {\em equal sized parts} or {\em
buckets}, in the sense that each bucket lies on equal number of keys
from $U$. More precisely, let $\Rb$ in (\ref{Def:the representative keys in R}) now deterministically partition as $[r_0, r_1) \cup [r_1, r_2) \cup \ldots \cup [r_{\rho-1}, r_{\rho}]$ the universe $U$, with
 $r_{0}= 0, r_{\rho}= 2^b= |U|$ and each part size is
$|[r_{i}, r_{i+1})|= \frac{|U|}{\rho}, \forall i \in [\rho - 1]$.
Now, observe that for any target key $y \in U$, we can locate within
$O(1)$ time the bucket $i_y \equiv [r_{i_y}, r_{i_y+1})$ that $y$
lies, in symbols $r_{i_y} \leq y  < r_{i_y+1}$. Just benefit from
the fact that {\em all} the buckets are equally sized and divide $y$
by the bucket size $\frac{|U|}{\rho}$. This yields bucket $i_y=
\lfloor y/\frac{|U|}{\rho}\rfloor$.
At this point, querying  ${\tt Pred}(y, X)$ in the original file has
been reduced in $O(1)$ time to querying ${\tt Pred}(y, X \cap
[r_{i_y}, r_{i_y+1}))$ in bucket $i_y$.
So it remains to compute ${\tt Pred}(y, X \cap [r_{i_y},
r_{i_y+1}))$ in $O(1)$ time.  This is achieved by implementing each
bucket $i$ as a $q^*$-heap \cite{Will92}, provided that each bucket
$i$ load $|X \cap [r_{i}, r_{i+1})|$ whp remains $\Theta(\log n)$
per update step, $\forall i \in [\rho -1]$.
By choosing $\rho= \frac{n}{\log n}$ it is folklore that each bucket
$i\equiv [r_{i}, r_{i+1})$ whp contains load $|X \cap [r_{i},
r_{i+1})|= \Theta(\log n)$ from a dynamic random file $X \subset U$
sized $n$.

\vskip 0.05 cm
\noindent
{\bf $\mu$-random input.} Let us now generalize the above toy, but
illustrative, assumption of uniform input. For each key $x \in U$,
assume an arbitrary {\em known} probability $\Pr[x]$ that key $x$ is
drawn to be inserted into file $X$. It is easy to exploit $\Pr[x]$
and partition the universe into (possibly $\rho
>> n$) buckets $[r_{i}, r_{i+1})$ so that whp each bucket load
$|X \cap [r_{i}, r_{i+1})|$ whp remains $\Theta(\log n)$. Just
compute for each bucket $i$ its endpoints $r_{i}, r_{i+1}$ that
satisfy $\sum_{x= r_{i}}^{r_{i+1}} \Pr[x] = \frac{\log n}{n}$. Whp
this imposes tight $\Theta(\log n)$ bucket load bound per update
step, as required for the $q^*$-heaps implementing the buckets to
remain $O(1)$ efficient.

However, now it is not so easy (as the division $\lfloor
y/\frac{|U|}{\rho}\rfloor$ above for the toy example of uniform
input) to locate the bucket $i_y \equiv [r_{i_y}, r_{i_y+1})$ that
query $y$ lies in. Because now each bucket $i$ has its own size
$|[r_{i}, r_{i+1})|$ possibly $\neq \frac{|U|}{\rho}$. The subtle
reason is that the bucket size $|[r_{i}, r_{i+1})|$ depends on the
accumulated probability mass $\sum_{x= r_{i}}^{r_{i+1}} \Pr[x]$ of
all the keys $x$ in the bucket $[r_{i}, r_{i+1})$. Intuitively, a
high valued probability function $\Pr[x]$ for the $x$'s in $[r_{i},
r_{i+1})$ would induce very narrow endpoints $r_{i}, r_{i+1}$, while
a low valued one would significantly widen the corresponding bucket
interval.
Here our rescue comes from storing all the bucket endpoints $\Rb$ in (\ref{Def:the representative keys in R}) into the ${\tt sDS}$ which essentially is a VEB structure exactly as described in \cite[Lem. 17]{PT06b}. Similarly as above, to determine the bucket
$[r_{i_y}, r_{i_y+1})$ that $y$ lies in, it reduces to compute the
predecessor of $y$ in the ${\tt sDS}$ of $\Rb$, which takes $O(1)$ time.

But, it remains the important concern of total ${\tt sDS}$ space to be
explained. It is critical to cheaply encode each bucket endpoint
$r_i \in \Rb$ in (\ref{Def:the representative keys in R}) by consuming the fewest bits. Roughly, the bits per
$r_i \in \Rb$  must not exceed $O(\log n)$, otherwise the total ${\tt sDS}$
space would be too excessive.
Observe that the required bits per  endpoint $r_i \in \Rb$, as well
as the total number $\rho$ of endpoints in $\Rb$, are critically
related to the smoothness of distribution function $\Pr$ (Def. \ref{Def:smooth}). Once more,
if $\Pr$ is too ``picky'', inducing some dense bucket $[r_{i},
r_{i+1})$, then the corresponding endpoints $r_{i}, r_{i+1}$  would
be too close to each other. Thus to encode each such endpoint
(amongst possibly $\rho >> n$ many ones) would sacrifice
$\omega(\log n)$ bits, yielding an overall space explosion of the
${\tt sDS}$.
We prove probabilistically (Th. \ref{Th:each part contains at most one key}) that such a bad scenario of highly dense
endpoints is quite unlikely when considering any smooth input
distribution in the class of \cite{and:mat}.
In this way, the total space is whp   $O(n^{1+ \delta})$ for any
constant $\delta >0$.
\vskip 0.05 cm
\noindent
{\bf Unknown smooth input as in \cite{and:mat}.} The  morals
discussed above will help us to tackle our final unresolved issue.
As it is common in real world applications, here the input
probability $\Pr$ is completely {\em unknown} to us, except that,
$\Pr$ belongs to the class (Def. \ref{Def:smooth}) of
smooth discrete distributions, which are more general (as Eq. (\ref{Eq:f1-f2-sparce partition}) describes) than the ones previously studied in \cite{and:mat}.
This causes a great obstacle for maintaining whp $\Theta(\log n)$
load per bucket being implemented as a $q^*$-heap: now $\mu$ is {\em
unknown}, so it is impossible to compute consecutive buckets
$[r_{i}, r_{i+1})$ as to conveniently hold $\sum_{x=
r_{i}}^{r_{i+1}} \Pr[x]= \frac{\log n}{n}$ per bucket $i$.

A random b2b game will help us to identify some ``well behaved''
endpoints $\Rb$ in (\ref{Def:the representative keys in R}) guaranteeing that each
of the $\rho-1$ consecutive buckets accumulates $\Theta
\left(\frac{\log n}{n}\right)$ probability mass (Th. \ref{Th:bits for reps} and Lemma \ref{thm:red-dots}).
Notice that here the endpoints in $\Rb$ are obtained
probabilistically, by exploiting the strong tail bounds induced by a
random game, a striking difference from the above examples, where
$\Rb$ was obtained deterministically by computing the summand of
probability $\Pr$ per bucket.
Intuitively, at the initialization phase of our data structure, the
set $X$ is formed by $n$ random inserted keys. Consider an arbitrary
bucket $I \subset U$ that (during the preprocessing phase, Sect.
\ref{SubSec:Implementation of the idea}) contains load $|X \cap I|=
\Theta (\log n)$ from $X$.  By the strong Binomial tail bounds, the
probability that bucket $I$ gets $\Theta (\log n)$ load, assuming
that its probability mass $\sum_{x \in I} \Pr[x]$ over the keys in
$I$ was either $o \left(\frac{\log n}{n}\right)$ or $\omega
\left(\frac{\log n}{n}\right)$, is exponentially small (Th. \ref{thm:red-dots-1}).
In words, the bad event that bucket $I$ gets load that deviates too
much its probability mass times $n$, is extremely unlikely.
The possible consecutive such buckets $I \subset U$ containing load
$|X \cap I|= \Theta (\log n)$ from $X$ are bounded by $n$. It
follows by the union bound that whp no bad event  occurs at any such
consecutive bucket.
Also, it follows that this will be true during each of the
subsequent $rn$ update operations on $X$, with $r= O(1)$.
%
%
%
%
%
%
%
%
%
\subsection{The proof plan}
\label{Sec:Proof plan}
%
%
%
%
\noindent
{\bf Step 1.}
In Section \ref{Sec:partitioning U into equal parts} we define a
partition $\P$ of the universe $U$ into $|{\cal P}|= n^{C_1= O(1)}$
{\em equal sized parts}, that is each part lies on the same number
$d_{{\cal P}}$ of elements of $U$.
Thus, each key $y \in U$ can be encoded into a single part of ${\cal
P}$, by applying a simple division $\frac{y}{d_{{\cal P}}}$, using
at most $\log \left(|{\cal P}|\right)= C_1 \log n$ bits. In words,
${\cal P}$ is a ``cheap'' way to index the keys of the universe $U$
with at most $C_1 \log n$ bits, which are enough for our ``high
probability'' purposes.

\noindent
{\bf Step 2.}
In Sections \ref{subSec:whp each part of P contains log n input
keys}, \ref{SubSec:keys in R are sufficiently apart} and
\ref{SubSec:P uniquely encodes R} the goal is to show that whp
${\cal P}$ {\em uniquely} encodes (indexes) each representative key
$r_i \in \Rb$ defined in (\ref{Def:the representative keys in R}), when $\alpha (n)n = C_4 \log n$ with $C_4= \Theta(1)$. It suffices to show that whp in each part of ${\cal
P}$ it is possible to lie at most one element from $\Rb$, in
symbols, $\forall p \in {\cal P}$ it holds $|p \cap \Rb| \leq 1$,
which is proved in Theorem \ref{Th:each part contains at most one
key}, an immediate consequence of Theorem \ref{Th:bits for reps} and Lemma
\ref{thm:red-dots}. In particular, Theorem \ref{Th:bits for reps}
shows that each (index) part $p \in {\cal P}$ whp contains very few
keys (say $|p \cap X| < C_{\P}\log n$) from the dynamic file $X$ (so
from $\Rb \subset X$ as well).
In turn, Lemma \ref{thm:red-dots} shows that whp each pair of
consecutive keys $r_i, r_{i+1} \in \Rb$ are sufficiently away or
apart, in symbols $|[r_i, r_{i+1}) \cap X |> C_3 \log n$, for a
sufficiently large constant $C_3 > C_{\P}$.
We conclude that Theorem \ref{Th:each part contains at most one key}
is a consequence of the fact that too many keys from the dynamic
file $X$ lie between any pair of consecutive but ``wide enough''
endpoints $r_i, r_{i+1} \in \Rb$, ruling out -in turn- any hope for
$r_i, r_{i+1}$ to lie in the same ``narrow enough'' part $p \in
{\cal P}$.
%
%

%
\noindent
{\bf Step 3.}
In Section \ref{Sec:Storing R into a FID} we index by ${\cal P}$ an
arbitrary query $y \in U$ and compute in $O(1)$ time its predecessor
$r_{i_y} \in \Rb$ using $n^{1+\delta}$ space, for any constant
$\delta >0$. We show this by uniquely indexing by ${\cal P}$ each $r_i \in
\Rb$ within $C_1 \log n$ bits and store the so-indexed $\Rb$ into our ${\tt sDS}$ (a static VEB as  \cite[Lem. 17]{PT06b}).
%

%
\noindent
{\bf Step 4.}
Finally, when handed $y$'s  predecessor $r_{i_y} \in \Rb$ (by {\bf
Step 3} above) in $O(1)$ time, in Section \ref{Sec:Implementing buckets of
R as q* heaps} we compute in $O(1)$ time $y$'s predecessor in the
dynamic file $X$, which is the goal of this paper. This is achieved
in $O(1)$ time in the corresponding $q^*$-heap that implements the
bucket of $r_{i_y} \in \Rb$, provided that $\Theta (\log n)$
elements from $X$ are stored in this bucket (Th. \ref{Th:bits for reps} and Lemma \ref{thm:red-dots}).
\noindent
{\bf Step 4.}
Finally, the corresponding amortized times are proved in Sections \ref{Sec:whp amortized time} and \ref{Sec:amortized time}.
\section{The proof details}
\label{Sec:Properties of the upper static data}
\subsection{A $n^{C_1= O(1)}$ partition ${\cal P}$ of the universe $U= [0, \ldots, 2^{b}]$}
\label{Sec:partitioning U into equal parts}
We show that each representative key in $\Rb$ is uniquely (and
cheaply wrt bits) indexed by a partition  $\P$ of universe $U$, in
the sense that any two distinct keys of $\Rb$ are mapped via $O(\log
n)$ bits to distinct parts of ${\cal P}$.
Let a  partition ${\cal P}$ of the universe $U= [0, \ldots, 2^{|b|}]$
into $|{\cal P}|= n^{C_1= O(1)}$ {\em equal sized parts}, that is,
each
part lies on the same number $d_{{\cal P}}$ of elements of $U$.
Constant $C_1$ depends only on the smoothness (Def. \ref{Def:smooth} and (\ref{Eq:f1-f2-sparce partition})) parameters $f_1, f_2$.
Each key $y \in U$ can be mapped into a single part of ${\cal P}$,
by applying a simple division $\frac{y}{d_{{\cal P}}}$. ${\cal P}$
is a ``cheap'' way to index each key in $U$ within $\log
\left(|{\cal P}|\right)= C_1 \log n$ bits, $C_1= O(1)$.
The goal is to show that $\P$ uniquely indexes the representative
keys of $\Rb$ by combining subsections \ref{subSec:whp each part of
P contains log n input keys} and \ref{SubSec:keys in R are
sufficiently apart}.
\subsection{Randomness invariance wrt the parts $\in \P$ per update step}
\label{Sec:Randomness invariance per update step}
The $n$ random keys inserted at the preprocessing phase (Sect.
\ref{SubSec:Implementation of the idea}) of $n$ steps and each
inserted key during the operational phase of $\Omega (n^{1+\delta})$
steps, are drawn wrt an {\em unknown} but smooth (Def. \ref{Def:smooth}) input distribution $\Pr$ over
$U$.
Furthermore, each deleted key during the operational phase is
selected uniformly at random from the currently stored keys.
In Section \ref{Sec:partitioning U into equal parts} we defined the
{\em deterministic} partition $\P$ of the universe $U= [0, \ldots,
2^{b}]$ into equal parts. Note that $\P$ is deterministic, thus, no
part $\in \P$ is constructed or biased by exploiting the random
sample. That is, no information has been exposed from any observed
random key as to facilitate (the length or position) of any part of
$\P$.
From such $\Pr$-random insertions and uniform deletions the work in
\cite{Knu77} implies that:
\begin{prop}
During each step of the preprocessing and operational phase the keys
stored are $\Pr$-randomly distributed per deterministic part of $\P$.
\label{Prop:randomness invariance wrt parts in P}
\end{prop}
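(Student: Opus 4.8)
The plan is to reduce the claim to a classical invariance fact about random insertions and random deletions in a fixed cell structure, in the spirit of \cite{Knu77}. First I would fix any single deterministic part $p \in \P$ (more generally, any fixed family of deterministic parts) and track the multiset $X \cap p$ through the $n$ preprocessing steps and the $\Omega(n^{1+\delta})$ operational steps. The key observation is that $\P$ was defined in Section~\ref{Sec:partitioning U into equal parts} \emph{before} any key was drawn, purely from the universe $U=[0,2^b]$ and the target cardinality $|\P|=n^{C_1}$; hence the event ``key $x$ falls in part $p$'' is measurable wrt the key value alone, and no observed key reveals anything about the geometry (length or position) of any part. So conditioning on the history of which parts received keys does not bias \emph{where inside a part} a key lies, nor the distribution of the next drawn key.

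The induction is on the step index $t$. The base case is trivial (the file is empty, or after the first $\Pr$-random insertion the single stored key is $\Pr$-distributed). For the inductive step there are two cases. \textbf{Insertion:} the new key $y$ is drawn from $\Pr$ independently of the current file $X_t$; by the induction hypothesis the keys already in any part $p$ are $\Pr$-restricted-to-$p$ distributed and mutually exchangeable, and adding an independent $\Pr$-sample preserves this, by the same computation that underlies Knuth's analysis of random insertions into buckets. \textbf{Deletion:} the deleted key is uniform among the currently stored keys; conditioned on the deleted key lying in part $p$ (which happens with probability proportional to $|X_t \cap p|$), the deleted key is uniform among $X_t \cap p$, so by exchangeability of $X_t\cap p$ the remaining keys in $p$ are still an exchangeable $\Pr|_p$-sample of one smaller size, and the parts $p'\neq p$ are untouched. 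Combining the two cases and summing over the disjoint parts gives the statement for step $t+1$. One has to be slightly careful that the ``round-robin'' ordering of insertions and deletions (the $\Theta(n)$-keys-stored invariant of Property~\ref{Prop:bounding the number of keys in X per step}) does not interact with randomness: since the \emph{order} of operations is chosen without looking at key values, it is a deterministic (or adversarial-but-oblivious) schedule and the conditional-distribution argument above goes through unchanged at each step.

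I would then package the per-step, per-part exchangeability into the single sentence of the Property: at every step of both phases, the stored keys are, part by part, distributed as independent $\Pr$-samples conditioned on the respective part. The main obstacle I anticipate is not any hard computation but the bookkeeping of \emph{independence across parts} under deletions: a uniform deletion couples the parts (it removes a key from exactly one part, chosen with probability proportional to the current loads), so one must phrase the invariant as ``conditioned on the vector of loads, the keys are independent $\Pr$-samples within their parts'' and check that both insertion and deletion preserve \emph{this} conditional statement. Once the invariant is stated in that load-conditioned form, the verification is the routine exchangeability argument of \cite{Knu77}, and the Property follows. \proofbox
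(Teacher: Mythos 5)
Your proposal is correct and takes essentially the same route as the paper: the paper's own argument is the one-paragraph observation that $\P$ is deterministic (chosen without looking at the sample), insertions are $\Pr$-random, deletions are uniform over the currently stored keys, and then an appeal to Knuth's invariance theorem \cite{Knu77}. You merely make the induction and the load-conditioned exchangeability explicit, which is a useful elaboration but not a different proof.
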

\subsection{$\Theta (n)$ keys are stored per update step of the operational phase}
\label{subSec:the number of stored keys per update step}
Recall that during the operational phase (Sect. \ref{SubSec:Implementation of the idea}) queries ${\tt Insrt}(y, X)$ and ${\tt Del}(y, X)$ occur in an arbitrary (round robin) order and frequency, but must satisfy that
overall $\Theta (n)$ keys are stored in the dynamic file $X$ per update step. On the other hand
queries  ${\tt Pred}(y, X)$, ${\tt Memb}(y, X)$ are unrestricted. We conclude:
\begin{prop}
During each step $t= 1, \ldots, \Omega(n^{1+\delta})$, of the operational phase, the number $n_t$ of the keys
stored in file $X$ satisfy $n \leq C_{\min} n \leq n_t < C_{\max} n$, with $n$ the number of keys in $X$ at the end of preprocessing, and $C_{\min}, C_{\max}$ constants.
\label{Prop:bounding the number of keys in X per step}
\end{prop}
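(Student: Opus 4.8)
\medskip
\noindent\textbf{Proof plan.} The plan is to obtain Property~\ref{Prop:bounding the number of keys in X per step} as a bookkeeping consequence of the definition of the operational phase in Section~\ref{SubSec:Implementation of the idea}, with the two constants made explicit. First I would pin down the initial count: at the end of the preprocessing phase the file holds exactly $n_0 = n$ keys, namely the $n$ random keys inserted during the $n$ preprocessing steps. Then I would observe that among the four admissible queries only ${\tt Insrt}(y,X)$ and ${\tt Del}(y,X)$ change $|X|$, each by exactly $\pm 1$, whereas ${\tt Pred}(y,X)$ and ${\tt Memb}(y,X)$ leave $|X|$ untouched. Writing $I_t$ and $D_t$ for the number of insert and delete queries among the first $t$ operational steps, we therefore have $n_t = n + I_t - D_t$ for every step $t$, with the trivial safety bound $D_t \le n + I_t$ — one never deletes from an empty file, and, per Section~\ref{Sec:Randomness invariance per update step}, a delete targets a uniformly chosen key actually present in $X$ — so $n_t \ge 0$ at all times.

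Next I would invoke the standing restriction on the operational phase: ${\tt Insrt}$ and ${\tt Del}$ may be interleaved in an arbitrary round-robin order and frequency, but subject to the promise that $\Theta(n)$ keys are stored at every step. I would make this promise quantitative by taking $C_{\min}, C_{\max}>0$ to be precisely the constants hidden in that $\Theta(\cdot)$ (in particular the initial value $n_0 = n$ lies in the window $[\,C_{\min} n,\, C_{\max} n\,)$), and then argue by induction on $t = 1, \ldots, \Omega(n^{1+\delta})$ that $C_{\min} n \le n_t < C_{\max} n$. The base case is the initial count; in the inductive step a single $\pm 1$ move cannot leave the window $[\,C_{\min} n,\, C_{\max} n\,)$ without the adversary breaking the promise, which the model forbids. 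This yields $n_t = \Theta(n)$ uniformly over all operational steps, which is exactly the assertion of Property~\ref{Prop:bounding the number of keys in X per step}.

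Finally I would record what the statement is for, i.e.\ its use downstream: a two-sided $\Theta(n)$ bound on $n_t$, combined with Proposition~\ref{Prop:randomness invariance wrt parts in P} (at each step the stored keys are $\Pr$-randomly distributed per deterministic part of $\P$), is what keeps the expected load of every part of $\P$ — hence of every bucket $[r_i,r_{i+1})$ — at $\Theta(\log n)$ throughout, which is the precondition for the $q^*$-heaps implementing the buckets to answer in $O(1)$ time. I do not expect a genuine obstacle here: essentially all of the content is the correct reading of the adversarial update model, and the statement is a quantitative repackaging of the model assumption; the base case plus the single $\pm1$ per-step increment do the rest. The only mildly delicate point is the lower end — a long burst of deletions — and it is closed simultaneously by the round-robin promise and by the fact that ${\tt Del}(y,X)$ is only ever applied to a key present in $X$, so $n_t$ can neither drop below $0$ nor, by the promise, below $C_{\min} n$.
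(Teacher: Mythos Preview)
Your proposal is correct and matches the paper's treatment: both recognize Property~\ref{Prop:bounding the number of keys in X per step} as nothing more than a quantitative restatement of the model assumption that ${\tt Insrt}/{\tt Del}$ queries must keep $\Theta(n)$ keys in $X$ at every operational step. The paper does not prove anything here---it simply recalls the assumption and writes ``We conclude:'' before stating the property---so your induction and $\pm 1$ bookkeeping, while harmless, are more machinery than the paper supplies or needs.
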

\subsection{Whp each part of ${\cal P}$ contains $\leq C_{\P} \log n$ keys from file $X$, with $C_{\P}= O(1)$}
\label{subSec:whp each part of P contains log n input keys}
The most general class of unknown $(f_1,f_2)$-smooth input distributions, so that $O(\log \log n)$ expected search times is achievable by ${\tt IS}$, was defined in \cite{and:mat} with $f_1, f_2$ parameters as:
\begin{eqnarray}
f_1(n)=\frac{n}{\log^{1+\epsilon} \log{n}}
~\mbox{and}~
f_2(n)= n^{\alpha}
%
%
\end{eqnarray}
with constants $0 < \alpha < 1$ and $\epsilon > 0$.
However, in our case, we will define an even more general class of smooth distributions:
\begin{eqnarray}
f_1(n)= n^{\gamma}
~\mbox{and}~
f_2(n)= n^{\alpha}
\label{Eq:f1-f2-sparce partition}
\end{eqnarray}
with constants $0 < \alpha < 1$ and $\gamma > 0$.
\begin{theorem}
\label{Th:bits for reps}
Let $n$ keys being inserted to file $X$ at the {\em Preprocessing phase}, obeying an unknown $(f_1, f_2)$-smooth
distribution from the general class considered in (\ref{Eq:f1-f2-sparce partition}). We can compute
constant $C_1= C_1(f_1, f_2)$, such that if $|{\cal P}|= n^{C_1}$
whp each part $p \in {\cal P}$ contains  $|p \cap X|< C_{\P}\log n$
random input keys during each update operation of the {\em
Operational phase} (Sect. \ref{SubSec:Implementation of the idea}),
with $C_{\P}= O(1)$.
\end{theorem}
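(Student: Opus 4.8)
The plan is to bound, for a fixed part $p \in \mathcal{P}$, the number of the $n_t$ random keys currently in $X$ that land in $p$, and then take a union bound over all $|\mathcal{P}| = n^{C_1}$ parts. By Proposition \ref{Prop:randomness invariance wrt parts in P} the $n_t$ keys in $X$ at any step are $\Pr$-randomly distributed with respect to the deterministic parts of $\mathcal{P}$, so $|p \cap X|$ is a sum of $n_t$ independent indicator variables, one per stored key, each equal to $1$ with probability $\mu_p := \Pr[y \in p]$ (the probability mass that the smooth distribution $\Pr$ assigns to part $p$). Thus $\mathbb{E}[|p \cap X|] = n_t \mu_p$, and by Proposition \ref{Prop:bounding the number of keys in X per step} we have $n_t = \Theta(n)$ throughout the operational phase.

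The key quantitative input is the smoothness of $\Pr$ from the class in (\ref{Eq:f1-f2-sparce partition}), namely $f_1(n) = n^{\gamma}$ and $f_2(n) = n^{\alpha}$ with $0 < \alpha < 1$. I would choose the number of parts $|\mathcal{P}| = n^{C_1}$ so that each part $p$ is (up to constants) one of the $f_1$ equal subintervals into which the smoothness definition (Def. \ref{Def:smooth}) partitions the universe; concretely, taking $C_1$ slightly above $\gamma$ (e.g.\ $C_1 = \gamma + 1$, or any constant with $n^{C_1} \geq f_1(n)$) makes each part $p$ fit inside one of the $f_1(n)$ pieces of an arbitrary enclosing interval. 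The smoothness bound then gives $\mu_p \leq \beta \, f_2(n)/n = \beta \, n^{\alpha - 1}$, so $\mathbb{E}[|p \cap X|] = n_t \mu_p = O(n \cdot n^{\alpha-1}) = O(n^{\alpha})$. Since $\alpha < 1$ this mean is $o(n)$ but, crucially, it may be polynomially large rather than $O(\log n)$; this is the main obstacle. To get down to $C_{\mathcal P}\log n$ I would refine $\mathcal{P}$ further: apply smoothness recursively (or equivalently raise $C_1$), partitioning each of the $n^{\alpha}$-mass pieces again into $f_1$ subpieces, so that after a constant number $k$ of rounds the mass of a part is at most $\beta^k (n^{\alpha-1})^{\cdots}$ — more precisely, iterating the bound $O(1/k)$ times drives the per-part mass to $O(n^{\alpha^k}/n)$, and choosing $k$ a large enough constant (depending only on $\alpha$, $\gamma$, and the target) makes $n^{\alpha^k} = O(\log n)$, hence $\mathbb{E}[|p \cap X|] = O(\log n)$. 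This fixes $C_1 = C_1(f_1,f_2)$ as a constant and keeps $|\mathcal{P}| = n^{C_1}$ polynomial.

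With $\mathbb{E}[|p \cap X|] = O(\log n)$ in hand, a Chernoff/Binomial tail bound gives $\Pr[\,|p \cap X| \geq C_{\mathcal{P}} \log n\,] \leq n^{-c}$ for a constant $c$ that grows with $C_{\mathcal{P}}$; picking $C_{\mathcal{P}}$ a sufficiently large constant makes $c > C_1 + 2$. A union bound over the $|\mathcal{P}| = n^{C_1}$ parts and over the $\Omega(n^{1+\delta})$ update steps (a polynomial number of events, say $n^{O(1)}$) then shows that with probability $1 - n^{-\Omega(1)}$ no part ever exceeds $C_{\mathcal{P}} \log n$ keys; enlarging $C_{\mathcal{P}}$ absorbs the extra polynomial factor from the number of steps. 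One technical point to handle carefully: across update steps the set $X$ changes, and deletions are uniform among stored keys rather than $\Pr$-random, so I would invoke Proposition \ref{Prop:randomness invariance wrt parts in P} (the randomness-invariance property, which already accounts for $\Pr$-random insertions and uniform deletions) to guarantee that the conditional distribution of the stored keys over the parts of $\mathcal{P}$ remains the product form at every step, so the per-step tail bound applies verbatim. This yields the claimed statement with $C_{\mathcal{P}} = O(1)$.
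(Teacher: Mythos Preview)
Your overall strategy matches the paper's proof in Appendix~\ref{app:B} almost exactly: recursively refine $\mathcal{P}$ using the smoothness bound until each part carries probability mass $\le (\log n)/n$, then apply a Binomial/Chernoff tail and a union bound over the $n^{O(1)}$ parts (and steps). The paper carries out the same recursion and obtains the explicit value $C_1 \approx \gamma/(1-\alpha)$.

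There is, however, one genuine error in your argument. You write that ``choosing $k$ a large enough constant (depending only on $\alpha,\gamma$, and the target) makes $n^{\alpha^{k}} = O(\log n)$.'' This is false: for any \emph{constant} $k$, the exponent $\alpha^{k}$ is a fixed positive number, so $n^{\alpha^{k}}$ is polynomial in $n$, not $O(\log n)$. The number of recursion levels actually needed is
\[
h \;=\; \Theta\!\left(\frac{\log\log n}{\log(1/\alpha)}\right)\;=\;\Theta(\log\log n),
\]
exactly as the paper computes in~(\ref{eq:height}). With $h$ growing, your implicit reason for $|\mathcal{P}|=n^{O(1)}$ (namely ``constant number of rounds, each multiplying the count by a polynomial'') collapses.

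The fix---and the one step you are missing---is the geometric-series computation the paper does in~(\ref{Eq: polynomial parts in P}): at level $i$ each part is split into $f_1(\nu^{\alpha^{i}})=\nu^{\gamma\alpha^{i}}$ pieces, so
\[
|\mathcal{P}| \;=\; \prod_{i=0}^{h} \nu^{\gamma\alpha^{i}} \;=\; \nu^{\,\gamma\sum_{i=0}^{h}\alpha^{i}} \;<\; \nu^{\,\gamma/(1-\alpha)}.
\]
Because $0<\alpha<1$, the exponent $\gamma\sum\alpha^{i}$ is bounded by the constant $\gamma/(1-\alpha)$ \emph{regardless of how large $h$ is}, and this is what makes $C_1$ a constant. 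Once you insert this argument in place of the ``constant $k$'' claim, the rest of your proof (Chernoff tail plus union bound over parts and over the polynomially many update steps, using Properties~\ref{Prop:randomness invariance wrt parts in P} and~\ref{Prop:bounding the number of keys in X per step}) goes through and coincides with the paper's.
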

\begin{proof}
See App. \ref{app:B}.
\end{proof}
\subsection{Whp the keys in  $\Rb$ are $> C_3 \log n$ apart, with constant $C_3 > C_{\P}$}
\label{SubSec:keys in R are sufficiently apart}
%
%
\subsubsection{Preprocessing (Sect. \ref{SubSec:Implementation of the idea}) partitions $U$ into $\rho$ parts of $\Theta (\frac{\log n}{n})$ probability mass.}
\label{SubSec:probability mass of each PR part}
Recall from the preprocessing phase (Sect. \ref{SubSec:Implementation of the idea}) the ordered file $\X$ in (\ref{Def:X-ordered-increasingly}) and the $\alpha (n) n$-apart representative keys in $\Rb$  defined in (\ref{Def:the representative keys in R}). We set $\alpha (n) n= C_4 \log n$, with $C_4= \Theta (1)$, which implies that key $r_i \in \Rb$ in (\ref{Def:the representative keys in R}) is the key $x_{(i+1)\alpha (n)n} \equiv x_{(i+1)C_4 \log n} \in \X$ in (\ref{Def:X-ordered-increasingly}).
In this way, the $\rho= \lfloor n/(C_4 \log n) \rfloor$ keys in $\Rb$ induce a partition $\P_{\Rb}$ of the universe $U$.

Observe that, opposed to partition $\P$ (Sect. \ref{Sec:partitioning U into equal parts}), the parts of $\P_{\Rb}$ are not necessarily equal. Also, opposed to partition $\P$, the endpoints of each part of $\P_{\Rb}$ are not deterministic, instead these are sampled random keys wrt the unknown input distribution $\Pr$. As a negative  consequence, Property \ref{Prop:randomness invariance wrt parts in P} is not further extendable to the corresponding parts of $\P_{\Rb}$, unless distribution $\Pr$ was continuous.
\begin{theorem}
\label{thm:red-dots-1}
Whp all parts of $\P_{\Rb}$ are spread on corresponding  subintervals of the
universe $U$ with probability mass $\Theta (\alpha (n))= \Theta (\frac{C_4 \log n}{n})$ where $C_4= \Theta (1)$.
\end{theorem}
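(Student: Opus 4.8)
The plan is to transfer the desired statement about the \emph{random} parts of $\P_{\Rb}$ onto a fixed, deterministic ``reference'' partition of $U$ whose blocks already have probability mass of the right order; all union bounds are then taken over only polynomially many fixed events, and the conclusion for the random buckets follows deterministically on one good event. For $I\subseteq U$ write $\mu(I)=\sum_{x\in I}\Pr[x]$ for its probability mass.

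\textbf{Step 1 (a reference partition by mass).} First I would recycle the equal-sized partition $\P$ of Section \ref{Sec:partitioning U into equal parts} (with $|\P|=n^{C_1}$ parts) and greedily merge its consecutive parts into blocks, starting a new block as soon as the accumulated mass first reaches $\frac{\log n}{n}$ (and merging the final leftover, if any, into the preceding block); call the resulting partition $\mathcal T=\{T_1,\dots,T_L\}$. The key auxiliary fact is that whp every $p\in\P$ has mass $\mu(p)\le \frac{M\log n}{n}$ for a suitable constant $M$: if $\mu(p)>\frac{M\log n}{n}$ then $\E[\,|p\cap X|\,]=n\,\mu(p)>M\log n$, so by the Chernoff lower tail $\Pr[\,|p\cap X|<C_{\P}\log n\,]\le n^{-\Omega(M)}$ once $M>2C_{\P}$; a union bound over the $n^{C_1}$ parts (with $M$ large relative to $C_1$), intersected with the whp event of Theorem \ref{Th:bits for reps} (which gives $|p\cap X|<C_{\P}\log n$ for every $p\in\P$), rules this out. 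Hence every merged block satisfies $\mu(T_j)\in\big[\frac{\log n}{n},\,(1+M)\frac{\log n}{n}\big]=\Theta\!\big(\frac{\log n}{n}\big)$, and $L=O(n/\log n)\le n$. Note that smoothness (\ref{Eq:f1-f2-sparce partition}) enters here only through Theorem \ref{Th:bits for reps}; in particular no atom of $\Pr$ is heavier than $\frac{M\log n}{n}$.

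\textbf{Step 2 (concentration on the fixed blocks) and Step 3 (squeezing the buckets).} For each fixed $T_j$ the count $|T_j\cap X|\sim\mathrm{Bin}(n,\mu(T_j))$ has mean $n\,\mu(T_j)=\Theta(\log n)$; tuning the block-mass threshold, equivalently the constant $C_4$ with $\alpha(n)n=C_4\log n$, large enough, the multiplicative Chernoff bound yields constants $0<c_1<c_2$ with $\Pr[\,|T_j\cap X|\notin[c_1\log n,\,c_2\log n]\,]\le n^{-\Omega(C_4)}$, so a union bound over the $L\le n$ blocks shows that whp \emph{every} block of $\mathcal T$ holds between $c_1\log n$ and $c_2\log n$ keys of $X$. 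Work on this whp event. By construction each part $[r_i,r_{i+1})$ of $\P_{\Rb}$ holds exactly $C_4\log n$ keys of $X$. If it meets $s$ consecutive blocks of $\mathcal T$ it contains the $s-2$ interior ones entirely, so $(s-2)c_1\log n\le C_4\log n$, i.e.\ $s\le 2+C_4/c_1=O(1)$, whence $\mu\big([r_i,r_{i+1})\big)\le s\,(1+M)\frac{\log n}{n}=O\!\big(\frac{\log n}{n}\big)$. Conversely, if the bucket met only two blocks it would hold at most $2c_2\log n$ keys, impossible once $C_4>2c_2$; so it meets $\ge 3$ blocks and therefore contains at least one block of $\mathcal T$ entirely, giving $\mu\big([r_i,r_{i+1})\big)\ge \frac{\log n}{n}=\Omega\!\big(\frac{\log n}{n}\big)$. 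Since Steps 1--2 already absorbed every union bound, both inequalities hold simultaneously for all $\rho\le n$ parts of $\P_{\Rb}$, so whp each part of $\P_{\Rb}$ has mass $\Theta\!\big(\frac{C_4\log n}{n}\big)=\Theta(\alpha(n))$.

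\textbf{Where the difficulty lies.} The concentration itself is routine Chernoff; the genuine obstacle is that the parts of $\P_{\Rb}$ are random sets (their endpoints are sampled keys), so a naive ``Binomial tail per bucket'' union bound is unavailable. Interposing the deterministic partition $\mathcal T$ and using the $O(1)$-block sandwich of Step 3 is exactly what legitimizes the union bound. The accompanying technical point is that $\mathcal T$ must be built with \emph{all} block masses pinned to a $\Theta(\log n/n)$ band even though $\Pr$ is discrete, unknown, and a priori could concentrate a large atom somewhere; this is precisely the place where smoothness (\ref{Eq:f1-f2-sparce partition}), channeled through Theorem \ref{Th:bits for reps}, is indispensable.
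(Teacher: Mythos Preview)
Your proof is correct and takes a genuinely different, more careful route than the paper's.

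The paper argues directly on the random buckets: for each part $I_i$ of $\P_{\Rb}$, it writes down the Binomial probability ${n \choose \alpha(n)n} q_i(n)^{\alpha(n)n}(1-q_i(n))^{(1-\alpha(n))n}$ that an interval of mass $q_i(n)$ receives exactly $\alpha(n)n$ keys, observes that this vanishes when $q_i(n)$ is $o(\alpha(n))$ or $\omega(\alpha(n))$, and union-bounds over the $\rho<n$ buckets. This is brief but informal in exactly the place you identify: the $I_i$ are \emph{defined} to contain $\alpha(n)n$ sample points, so the displayed Binomial expression is not literally the probability of the ``bad scenario'', and the union bound over ``$\rho<n$ possible realizations'' is taken over random, sample-dependent events. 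The intended content is a spacing-of-order-statistics bound, but the conditioning is not made explicit.

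Your interposition of a deterministic reference partition $\mathcal{T}$ with block masses pinned in a $\Theta(\log n/n)$ band, followed by Chernoff on the \emph{fixed} blocks and then a deterministic $O(1)$-block sandwich of every random bucket, is precisely what legitimizes the union bound. Two minor remarks: (i) your contrapositive extraction of the bound $\mu(p)\le M\log n/n$ from Theorem~\ref{Th:bits for reps} is valid, but since $\mu(p)$ is deterministic this is really a deterministic fact, and in the paper it is established directly (Remark~\ref{Rem:upper bound prob mass of tiny P part}) en route to Theorem~\ref{Th:bits for reps}; (ii) the constraint $C_4>2c_2$ you need in Step~3 is consistent with the paper's setup, where $C_4$ is a tunable constant. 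What your approach buys is a clean separation of the randomness (confined to fixed events) from the geometry (the sandwich); what the paper's approach buys is brevity, with the order-statistics concentration left implicit.
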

\begin{proof}
See App. \ref{app:C}.
\end{proof}
\subsubsection{Operational phase: insert/delete balls into $\Theta (\log n)$ loaded bins.}
\label{SubSec:balls to bins}

\noindent
Recall (Sect. \ref{SubSec:probability mass of each PR part}) the partition $\PR$, induced by the keys in $\Rb$ defined in (\ref{Def:the representative keys in R}) at the preprocessing phase (Sect. \ref{SubSec:Implementation of the idea}). We interpret each $\PR$ part as a {\em bin} and each key in file $X$ as a {\em ball}.
Theorem \ref{thm:red-dots-1} implies that during the subsequent operational phase (Sect. \ref{SubSec:Implementation of the idea}) each {\em new} inserted ball into $X$  will land to any given bin with probability $\Theta (\frac{C_4 \log n}{n})$.
The existing $n$ balls, being inserted into file $X$ until the end of operational phase, are considered as {\em old}.
Property \ref{Prop:bounding the number of keys in X per step} implies that per update step $t= 1, \ldots, \Omega(n^{1+\delta})$, there are at most $n_t < C_{\max} n$ new balls thrown into the $\rho$ bins. Thus the load of each bin, wrt to the new balls, is governed by a Binomial random variable with $\Theta (C_4 \log n)$ expectation.  Its folklore to use its strong tail bounds and prove (each bin also contains $C_4 \log n $ old balls) that:
\begin{lemma}
We can compute  $C^*= O(1)> C_4:$ whp no bin induced by $\PR$  gets load $> C^* \log n$.
\label{lem:upper bound of bins load}
\end{lemma}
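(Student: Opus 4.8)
The plan is to apply a Chernoff-type tail bound to the load of each bin of $\PR$, then take a union bound over the $\rho = \lfloor n/(C_4\log n)\rfloor$ bins and over all $\Omega(n^{1+\delta})$ update steps. Fix a step $t \le \Omega(n^{1+\delta})$ and a bin $B$ of $\PR$. By Theorem \ref{thm:red-dots-1}, $B$ is spread on a subinterval of $U$ with probability mass $\Theta(\frac{C_4\log n}{n})$, so, conditioning on Property \ref{Prop:randomness invariance wrt parts in P} (the stored keys are $\Pr$-distributed per part; note $\PR$'s endpoints are themselves random, but they were fixed during preprocessing, so after conditioning on the preprocessing sample the subsequent draws are still i.i.d.\ wrt $\Pr$), each of the $n_t < C_{\max} n$ \emph{new} balls lands in $B$ independently with probability $p_B = \Theta(\frac{C_4\log n}{n})$. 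Hence the number of new balls in $B$ is stochastically dominated by a Binomial$(C_{\max} n,\, c\,\frac{C_4\log n}{n})$ variable, whose expectation is $\mu = \Theta(C_4\log n)$. To this add the $C_4\log n$ \emph{old} balls already present in $B$ at the end of preprocessing (again $\Theta(C_4\log n)$ by Theorem \ref{thm:red-dots-1}); the total intended load is $\Theta(C_4\log n)$.

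Next I would invoke the standard multiplicative Chernoff bound: for the new-ball count $S_B$ with mean $\mu$, $\Pr[S_B > (1+\lambda)\mu] \le e^{-\lambda^2\mu/(2+\lambda)}$. Choosing $C^*$ a sufficiently large constant multiple of $C_4$ (larger than the constant hidden in the $\Theta(\cdot)$ of Theorem \ref{thm:red-dots-1} plus the old-ball contribution), the event $\{|X_t \cap B| > C^*\log n\}$ forces $S_B$ to exceed its mean by a constant factor $\lambda = \Theta(1)$, so its probability is at most $e^{-c' C_4\log n} = n^{-c' C_4}$ for a constant $c' > 0$ depending on $\lambda$. By picking $C_4$ (and hence the threshold $C^*$) large enough we make the exponent $c' C_4$ exceed $2 + \delta$, say, so that a union bound over the $\rho \le n$ bins and the $\Omega(n^{1+\delta})$ steps still leaves total failure probability $O(n \cdot n^{1+\delta} \cdot n^{-c'C_4}) = o(1)$.

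The only genuinely delicate point — the rest being the ``folklore'' Chernoff-plus-union-bound computation the lemma statement already advertises — is the independence/conditioning bookkeeping across update steps. Deletions are uniform among currently stored keys (Section \ref{Sec:Randomness invariance per update step}), and by the argument of \cite{Knu77} underlying Property \ref{Prop:randomness invariance wrt parts in P} the conditional distribution of the stored keys per part of the \emph{deterministic} partition $\P$ remains $\Pr$-random; but $\PR$ has random endpoints, so strictly speaking one argues on $\P$ and transfers to $\PR$ via the fact (Section \ref{SubSec:probability mass of each PR part}, and used again in Theorem \ref{Th:each part contains at most one key}) that each $\PR$-bin is a union of $O(\log n)$-many $\P$-parts up to boundary effects, each of mass $\Theta(\frac{C_4\log n}{n})$ by Theorem \ref{thm:red-dots-1}. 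Summing the per-$\P$-part bounds and absorbing the one partial boundary part into the constant $C^*$ closes the gap. I would also note that $C^* > C_4$ is immediate since $C^*$ is chosen as a large multiple of $C_4$, matching the claim.
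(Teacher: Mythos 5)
Your overall approach --- decompose each bin's load into $\le C_4\log n$ ``old'' balls (which can only decrease) plus a Binomial count of ``new'' balls with mean $\Theta(C_4\log n)$, then Chernoff and union bound over the $\rho<n$ bins and the polynomially many update steps --- is exactly what the paper has in mind; the text surrounding the lemma lays out precisely these ingredients and then calls the tail-bound computation ``folklore''. Your observation that $C_4$ (hence $C^*$) can be dialed up so the Chernoff exponent beats the $n^{2+\delta}$ union-bound overhead is the right bookkeeping and is consistent with the paper's $C_4=\Theta(1)$ being a free design parameter.

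The one place where you overcomplicate (and slightly overclaim) is the final ``transfer from $\P$ to $\PR$''. You assert that each $\PR$-bin is a union of $O(\log n)$-many $\P$-parts, but nothing in the paper supports an upper bound on that count: Theorem~\ref{Th:bits for reps} only bounds the number of \emph{keys} per $\P$-part from above (a $\P$-part can be nearly empty), so a single $\PR$-bin of mass $\Theta(\frac{C_4\log n}{n})$ could straddle an unbounded number of low-mass $\P$-parts. Fortunately you do not need this detour at all. The cleaner route --- and the one the paper implicitly uses, most explicitly in Section~\ref{SubSec:balls to bins} via the re-insertion device and Lemma~\ref{lem:random-invariance-balls-bins} --- is to condition on the preprocessing sample, which fixes $\PR$ once and for all. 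The operational-phase insertions are drawn independently of the preprocessing, hence are i.i.d.\ $\Pr$ conditional on $\PR$, and uniform deletions (being oblivious to key values) preserve that conditional i.i.d.-ness; Knuth's argument then applies with $\PR$ playing the role of the ``fixed'' partition. That replaces your $\P\to\PR$ transfer entirely, and with that substitution your proof matches the paper's intended one.
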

It remains to show that  per update step $t= 1, \ldots, \Omega(n^{1+\delta})$, whp there are at least $C_{\P} n$ new/old balls into each of the $\rho$ bins.
Trying to work as in the proof of the  upper bound in Lemma \ref{lem:upper bound of bins load}, a technical difficulty is that the distribution of the $n$ old balls is not the same as the new ones.
A way out is to {\em  remove} all the $n$ old balls (their sole purpose was to expose information from $\Pr$ and construct partition $\PR$ in a way that Th. \ref{thm:red-dots-1} holds) and insert again\footnote{It only induces an $O(n)$ time overhead to the preprocessing described in Sect. \ref{SubSec:Implementation of the idea}.} $n$ $\Pr$-random balls.
However, at the end of these $n$ random insertions, each $\PR$ part does not contain exactly  $C_4 \log n$ balls, rather it is expected to contain $\Theta (C_4 \log n)$ balls.
But now it is beneficial for the analysis that in each update step $t$, all the $n_t \geq n$ (Property \ref{Prop:bounding the number of keys in X per step}) existing balls are identically distributed into the $\rho$ bins, obtaining the following lemma.
\begin{lemma}
Per update step $t= 1, \ldots, \Omega(n^{1+\delta})$, each of the $n_t$ balls currently stored into file $X$ appears into an arbitrary bin with probability $\Theta (\frac{C_4 \log n}{n})$.
\label{lem:random-invariance-balls-bins}
\end{lemma}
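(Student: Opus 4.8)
The plan is to reduce the claim to the randomness invariance already established for the \emph{deterministic} partition $\P$ (Property~\ref{Prop:randomness invariance wrt parts in P}), after neutralizing the only source of correlation between the bins of $\PR$ and the balls that are currently stored. First I would condition on the outcome of the preprocessing phase, i.e.\ on the concrete endpoints $r_0 < r_1 < \cdots < r_\rho$ that define the partition $\PR$ of $U$. By Theorem~\ref{thm:red-dots-1}, with high probability this outcome is \emph{good}: every part $B$ of $\PR$ carries $\Pr$-mass $\Pr[B] = \Theta(\alpha(n)) = \Theta(\frac{C_4\log n}{n})$. Fix any such good realization; from now on $\PR$ is a fixed partition of $U$, of the same deterministic status as $\P$ in Section~\ref{Sec:partitioning U into equal parts}, except that its parts need not be equal-sized.

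The subtlety flagged in Section~\ref{SubSec:probability mass of each PR part} is that the endpoints of $\PR$ were sampled from the $n$ \emph{old} balls, so a priori the positions of the bins are correlated with those balls, and Property~\ref{Prop:randomness invariance wrt parts in P} does not directly transfer to $\PR$. This is exactly why, as the text describes, we \emph{remove} all $n$ old balls at the end of preprocessing and re-insert $n$ fresh $\Pr$-random balls (an $O(n)$ overhead). After this surgery, the entire subsequent stochastic process --- the $n$ re-insertions followed by the $\Omega(n^{1+\delta})$ operational steps consisting of $\Pr$-random insertions and uniform-random deletions --- is independent of the (now fixed) endpoints of $\PR$. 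Hence, with respect to this fresh process, no information about the length or position of any part of $\PR$ has been exposed by any ball that is still present, so $\PR$ plays precisely the role of the deterministic partition in Property~\ref{Prop:randomness invariance wrt parts in P}.

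I would then invoke Property~\ref{Prop:randomness invariance wrt parts in P} (via the $\Pr$-random insertions / uniform-random deletions argument of \cite{Knu77}): conditioned on the number $n_t$ of stored keys at step $t$ --- which by Property~\ref{Prop:bounding the number of keys in X per step} satisfies $n \le n_t < C_{\max} n$ --- each of the $n_t$ currently stored balls falls into an arbitrary bin $B$ of $\PR$ with probability equal to $\Pr[B]$. Combining with the good realization fixed in the first step, this probability is $\Theta(\frac{C_4\log n}{n})$ for every bin, which is the assertion of the lemma. Note that the ``whp'' in the statement is consumed entirely by the choice of a good realization of $\PR$ (Theorem~\ref{thm:red-dots-1}); conditioned on that, the per-ball per-bin probability is the exact $\Pr$-mass and holds deterministically for all steps $t$, so no union bound over $t$ is needed here.

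The main obstacle I expect is the third step: rigorously arguing that uniform-random deletions (not just $\Pr$-random insertions) preserve the invariance ``each stored ball is $\Pr$-distributed over the parts'' when one conditions on the \emph{number} $n_t$ of survivors rather than on a fixed sequence of operations --- this is where the exchangeability content underlying \cite{Knu77} must be applied with care, and where the non-continuity of $\Pr$ (which blocked the naive extension of Property~\ref{Prop:randomness invariance wrt parts in P} to $\PR$) is finally circumvented by the remove-and-reinsert decoupling.
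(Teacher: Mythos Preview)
Your plan is correct and mirrors the paper's own argument: the paper does not give a separate proof environment for this lemma, but the justification it offers in the surrounding text is exactly your three steps --- condition on a good realization of $\PR$ via Theorem~\ref{thm:red-dots-1}, perform the remove-and-reinsert of the $n$ old balls so that $\PR$ becomes independent of the stored keys, and then observe (as in Property~\ref{Prop:randomness invariance wrt parts in P} / \cite{Knu77}) that every stored ball at step $t$ is $\Pr$-distributed over the now-fixed bins, each of mass $\Theta(\frac{C_4\log n}{n})$. Your write-up is in fact more explicit than the paper's, particularly in flagging that no union bound over $t$ is needed once one conditions on a good $\PR$.
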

An immediate consequence of lemma \ref{lem:random-invariance-balls-bins} is the following:
\begin{lemma}
We can compute $C_{\PR}= O(1)$ with $C_{\P}< C_{\PR} < C_4:$ whp no bin induced by partition $\PR$  gets load  $< C_{\PR}\log n$.
\label{thm:red-dots}
\end{lemma}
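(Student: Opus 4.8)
The plan is to derive Lemma~\ref{thm:red-dots} as the lower-bound twin of Lemma~\ref{lem:upper bound of bins load}, now exploiting the cleaner stochastic picture furnished by Lemma~\ref{lem:random-invariance-balls-bins}. By that lemma, at every update step $t=1,\dots,\Omega(n^{1+\delta})$ the $n_t$ balls currently in $X$ are each placed, independently, into a fixed bin (i.e.\ a fixed $\PR$-part) with probability $p=\Theta\!\left(\frac{C_4\log n}{n}\right)$; this holds uniformly over all $\rho=\lfloor n/(C_4\log n)\rfloor$ bins because Theorem~\ref{thm:red-dots-1} guarantees each part carries probability mass $\Theta(\alpha(n))$. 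Hence the load $L_{i,t}=|X\cap[r_i,r_{i+1})|$ of bin $i$ at step $t$ is stochastically bounded below by a Binomial$(n_t,p)$ variable, and since $n_t\ge C_{\min}n$ by Property~\ref{Prop:bounding the number of keys in X per step}, its expectation is at least $C_{\min}n\cdot\Theta\!\left(\frac{C_4\log n}{n}\right)=\Theta(C_4\log n)$, i.e.\ $\E[L_{i,t}]\ge c\,C_4\log n$ for an absolute constant $c>0$.

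First I would fix the target constant: choose $C_{\PR}$ with $C_{\P}<C_{\PR}<C_4$ and small enough that $C_{\PR}<\tfrac12 c\,C_4$ (shrinking $C_{\PR}$ only makes the claim $C_{\P}<C_{\PR}$ easier to keep, and $C_{\P}$ is the fixed constant from Theorem~\ref{Th:bits for reps}; one simply takes $C_4$ large enough at the outset that the open interval $(C_{\P},C_4)$ is nonempty and contains a point below $\tfrac12 cC_4$). Then for a single bin $i$ at a single step $t$, a multiplicative Chernoff lower-tail bound on Binomial$(n_t,p)$ gives
\[
\Pr\!\left[L_{i,t}< C_{\PR}\log n\right]\le \exp\!\left(-\Omega(C_4\log n)\right)= n^{-\Omega(C_4)},
\]
where the hidden constant in the exponent depends only on the ratio $C_{\PR}/(cC_4)<\tfrac12$. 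Taking $C_4$ a sufficiently large constant makes this probability smaller than $n^{-(2+\delta)}$, say.

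Next I would union-bound. There are $\rho\le n$ bins and $\Omega(n^{1+\delta})=O(n^{1+\delta})$, in fact $n^{O(1)}$, update steps, so the total number of (bin, step) pairs is $n^{O(1)}$; choosing $C_4$ large enough that the per-pair failure probability is $n^{-K}$ for $K$ exceeding this polynomial degree, the union bound yields that with probability $1-n^{-\Omega(1)}$ every bin at every step has load at least $C_{\PR}\log n$. This is exactly the statement of Lemma~\ref{thm:red-dots}, and combined with Lemma~\ref{lem:upper bound of bins load} it pins every bin load in the window $[C_{\PR}\log n,\;C^*\log n]=\Theta(\log n)$ throughout the operational phase, which is what the $q^*$-heaps need and what Theorem~\ref{Th:each part contains at most one key} invokes.

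The main obstacle is not the Chernoff computation but justifying that the lower bound holds \emph{simultaneously} and \emph{persistently}: a naive union bound over the adversarially chosen insert/delete schedule is delicate because deletions are uniform over currently stored keys while insertions are $\Pr$-random, so the set $X$ at step $t$ is not a fresh i.i.d.\ sample. This is precisely the difficulty flagged in the text (``the distribution of the $n$ old balls is not the same as the new ones''), and the resolution is the reinsertion trick together with Lemma~\ref{lem:random-invariance-balls-bins}: after discarding the $n$ preprocessing balls and reinserting $n$ fresh $\Pr$-random balls, \emph{all} balls present at any step are identically (and, conditioned on the multiset of insert/delete operations, independently) distributed across the $\rho$ bins, so the per-step load genuinely dominates a Binomial and the union bound over steps is legitimate. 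I would make sure to state that the constants $C_4, C^*, C_{\PR}, C_{\P}$ are chosen in that dependency order so there is no circularity, and that $\gamma,\alpha$ (hence the smoothness of $\Pr$) enter only through Theorem~\ref{thm:red-dots-1}, which has already absorbed them.
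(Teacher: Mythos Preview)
Your proposal is correct and is precisely the fleshing out the paper intends: the paper states the lemma only as ``an immediate consequence of Lemma~\ref{lem:random-invariance-balls-bins}'' and leaves the Chernoff lower-tail plus union-bound over bins and steps implicit, which is exactly what you carry out. One small slip: your parenthetical ``shrinking $C_{\PR}$ only makes the claim $C_{\P}<C_{\PR}$ easier to keep'' is backwards (shrinking $C_{\PR}$ makes that inequality \emph{harder}), but your actual constant-selection---take $C_4$ large enough that $(C_{\P},\tfrac12 cC_4)$ is nonempty---is correct and resolves the tension.
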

\subsection{Whp $\P$ uniquely indexes each $r_i \in \Rb$ with  $C_1 \log n$ bits, $C_1= O(1)$}
\label{SubSec:P uniquely encodes R}
\begin{theorem}
At the {\em Preprocessing phase} (Sect. \ref{SubSec:Implementation
of the idea}) let $n$ input input keys, drawn from an unknown $(f_1,
f_2)$-smooth distribution, to be inserted into file $X$.
Then we can compute constant $C_1= C(f_1, f_2, \rho)$ such that
$|{\cal P}|= n^{C_1}$ such that, during each update operation of the
{\em Operational phase} (Sect. \ref{SubSec:Implementation of the
idea}), each part $p \in {\cal P}$ whp contains at most one key: $|p
\cap \Rb| \leq 1$ from the set $\Rb= \{r_1, r_2, \ldots, r_{\rho}\}$
of the representatives of $X$.
\label{Th:each part contains at most one key}
\end{theorem}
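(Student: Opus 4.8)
The plan is to derive Theorem~\ref{Th:each part contains at most one key} as a purely combinatorial consequence of the two concentration statements already in hand — Theorem~\ref{Th:bits for reps}, which says that every part of $\mathcal{P}$ is \emph{light}, and Lemma~\ref{thm:red-dots}, which says that consecutive representatives are \emph{far apart} — so that no fresh tail estimate is needed here.

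First I would fix $C_1 := C_1(f_1,f_2)$ to be exactly the constant supplied by Theorem~\ref{Th:bits for reps} (and, should the constant $C(f_1,f_2,\rho)$ demanded elsewhere be larger, replace it by the maximum of the two; this enlarges $|\mathcal{P}| = n^{C_1}$ by at most a polynomial factor and leaves every space/time bound intact). Let $\mathcal{E}_1$ denote the good event of Theorem~\ref{Th:bits for reps}: at every update step of the operational phase, $|p \cap X| < C_{\mathcal{P}}\log n$ for all $p \in \mathcal{P}$; it has probability $1-o(1)$. Let $\mathcal{E}_2$ denote the good event of Lemma~\ref{thm:red-dots}: at every update step, every bin $[r_i,r_{i+1})$ of $\PR$ satisfies $|[r_i,r_{i+1}) \cap X| \ge C_{\PR}\log n$, where the constants obey the chain $C_{\mathcal{P}} < C_{\PR} < C_4$ pinned down in Section~\ref{SubSec:balls to bins}; it too has probability $1-o(1)$. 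A union bound keeps $\mathcal{E}_1 \cap \mathcal{E}_2$ at probability $1-o(1)$; note that both events are already stated uniformly over all $\Omega(n^{1+\delta})$ update steps and over all $n^{C_1}$ parts (resp.\ all $\rho$ bins), so no further quantifier juggling is required.

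Then comes the short deterministic core. Working on $\mathcal{E}_1 \cap \mathcal{E}_2$, suppose toward a contradiction that some $p \in \mathcal{P}$ contains two representatives $r_i < r_j$ with $i < j$. Since every part of $\mathcal{P}$ is a contiguous subinterval of $U$ and $r_i < r_{i+1} \le r_j$, the representative $r_{i+1}$ also lies in $p$, whence $[r_i, r_{i+1}) \subseteq p$ and therefore $[r_i, r_{i+1}) \cap X \subseteq p \cap X$. Comparing cardinalities at any fixed update step yields
$$C_{\PR}\log n \;\le\; \bigl|[r_i,r_{i+1}) \cap X\bigr| \;\le\; |p \cap X| \;<\; C_{\mathcal{P}}\log n,$$
i.e.\ $C_{\PR} < C_{\mathcal{P}}$, contradicting $C_{\mathcal{P}} < C_{\PR}$. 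Hence on $\mathcal{E}_1 \cap \mathcal{E}_2$ every $p \in \mathcal{P}$ satisfies $|p \cap \Rb| \le 1$ throughout the operational phase, i.e.\ whp, which is the claim. (Here $\Rb$ and $\mathcal{P}$ are both frozen after the preprocessing phase, so the facts ``$r_i \in p$'' do not drift with the step index and it suffices that the light/far-apart bounds hold.)

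No analytic obstacle remains at this point — the whole probabilistic weight sits in Theorem~\ref{Th:bits for reps} and Lemma~\ref{thm:red-dots}. The one point demanding care is the strict gap $C_{\mathcal{P}} < C_{\PR}$ together with consistency of $C_1$ across its uses: the argument collapses if these two constants were allowed to coincide, so I would make sure Section~\ref{SubSec:balls to bins} indeed delivers $C_{\PR} = C_{\mathcal{P}} + \Omega(1)$ (it does, via $C_{\mathcal{P}} < C_{\PR} < C_4$) before the contradiction is drawn.
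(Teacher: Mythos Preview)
Your proposal is correct and follows essentially the same approach as the paper: the paper's proof is a two-line remark that the result ``is immediate by combining'' Theorem~\ref{Th:bits for reps} (each part $p\in\mathcal{P}$ holds $<C_{\mathcal P}\log n$ keys of $X$) with Lemma~\ref{thm:red-dots} (consecutive representatives are $\ge C_{\PR}\log n$ apart with $C_{\PR}>C_{\mathcal P}$), and you have simply spelled out that combination explicitly via the containment $[r_i,r_{i+1})\subseteq p$ and the resulting cardinality contradiction.
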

\begin{proof}
It is immediate by combining Theorem \ref{Th:bits for reps} (each
part $p \in \P$ contains $< C_{\P} \log n$ keys from file $X$) and
Lemma \ref{thm:red-dots} (each pair of $r_i, r_j \in \Rb$ is at
least $C_3 \log n$ apart, with constant $C_3 > C_{\P}$).
\end{proof}
%
%
\subsection{Storing  $\Rb$ into a $O(1)$ time predecessor static data structure  using $O(n^{1+\delta})$ space,
$\forall \delta > 0$}
\label{Sec:Storing R into a FID}
%
%
In Sect. \ref{Sec:partitioning U into equal parts} we showed that
each representative key in $\Rb$ is uniquely (and cheaply wrt bits)
indexed by a partition  $\P$ of universe $U$, in the sense of
Theorem \ref{Th:each part contains at most one key}: any two
distinct keys of $\Rb$ are mapped via $C_1 \log n$ bits to distinct
parts of ${\cal P}$.
Let $\widetilde{\Rb}$ this indexing of $\Rb$ by partition $\P$ and
store $\widetilde{\Rb}$ into a VEB data structure.
%
For any target key $y \in U$, indexed as $\widetilde{y}$ by $\P$, we
want to locate in $O(1)$ time the unique bucket $i_{\widetilde{y}}
\equiv [r_{i_{\widetilde{y}}}, r_{i_{\widetilde{y}}+1})$ that
$\widetilde{y}$ lies, in symbols $r_{i_{\widetilde{y}}} \leq
\widetilde{y} < r_{i_{\widetilde{y}}+1}$, which is identified by its
bucket endpoints $r_{i_{\widetilde{y}}}, r_{i_{\widetilde{y}}+1} \in
\widetilde{\Rb}$.
It is obvious that $r_{i_{\widetilde{y}}}\equiv {\tt
Pred}(\widetilde{y}, \widetilde{\Rb})$ and below we take advantage
of a VEB variant as described in \cite[Lem. 17]{PT06b}.
\begin{theorem}{\em \cite[Lem. 17]{PT06b}}
\label{Th:GORR09}
Let $\kappa$ be any positive integer.
Given any set $S\subseteq [0,m-1]$ of $n$ keys, we can build in $O(n2^{\kappa}\log m)$ time a static data structure which occupies $O(n2^{\kappa}\log m)$ bits of space and solves the static predecessor search problem in $O(\log(\frac{\log m-\log n}{\kappa}))$ time per query.
\end{theorem}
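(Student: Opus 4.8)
\emph{Plan.} The idea is to build a van~Emde~Boas--style recursive structure that branches in blocks of $\kappa$ bits, carries min/max pointers at every node so that a query descends a single root-to-leaf path, and whose recursion \emph{collapses} into a constant-time ``dense'' base case as soon as a sub-universe is only a $2^{\Theta(\kappa)}$ factor larger than the number of keys it still contains; this collapse is exactly what replaces the classical $\log m$ by $\log m-\log n$.

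\emph{The structure.} Sort $S$. A recursion node handles $n'$ keys inside a universe $[0,u)$; pick a split position $p$ (a multiple of $\kappa$, roughly $p\approx\frac{1}{2}(\log u+\log n')$) and split each key into a high part (its top $\log u-p$ bits) and a low part (its bottom $p$ bits). Store a hash table from each present high part $h$ to the triple (pointer to the cluster sub-structure on the low parts of the keys with high part $h$, minimum key of $S$ under $h$, maximum key of $S$ under $h$). Recurse on a \emph{summary} sub-structure solving predecessor search on the set of distinct high parts (universe $2^{\log u-p}$), and on every non-empty \emph{cluster}. Stop (base case) once $\log u\le\log n'+\Theta(\kappa)$: then cut $[0,u)$ into $\le n'$ blocks of $2^{\Theta(\kappa)}$ consecutive elements, store each non-empty block as a $2^{\Theta(\kappa)}$-bit indicator equipped with $O(1)$-time rank and select, and keep one more $O(n')$-bit indicator (also with rank/select) marking the non-empty blocks; a base-case query is then $O(1)$ and the base case uses $O(n'2^{\Theta(\kappa)})$ bits. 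We may assume $\kappa\le\frac{1}{2}\log m$, since otherwise $2^{\kappa}\ge\sqrt m$ and the statement is trivial; this keeps every rank/select table of size $\le\sqrt m$, shared globally.

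\emph{The query.} Given $x$, recurse as in the ordinary van~Emde~Boas query: compute its high part $x_{\mathrm{hi}}$ and low part $x_{\mathrm{lo}}$; if cluster $x_{\mathrm{hi}}$ is present and $x_{\mathrm{lo}}$ is at least that cluster's minimum low part, recurse into cluster $x_{\mathrm{hi}}$; otherwise recurse into the summary, and on getting back the predecessor high part $h'$ return $\mathrm{Pred}(x,S)$, which is the max-pointer stored for cluster $h'$; the remaining case ``$x$ lies before the whole current subtree'' is resolved once, at the top, through an adjacent-leaf pointer. Thus the query spends $O(1)$ work per recursion node and visits exactly one node per level.

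\emph{Analysis and the main obstacle.} The point of the adaptive split $p\approx\frac{1}{2}(\log u+\log n')$ is that the ``effective height'' $h:=\log u-\log n'$ of a sub-problem is forced down to about $h/2+\Theta(\kappa)$ in any child a query can descend into (the summary has universe $2^{\log u-p}$ with $\log u-p\le h/2$, and a cluster has universe $2^{p}$ with $p-\log(\text{its key count})\le h/2+\Theta(\kappa)$); since we halt at effective height $\Theta(\kappa)$, every root-to-base path then has length $O(\log\frac{\log m-\log n}{\kappa})$, which is the claimed query time. I expect this halving bound to be the hard part: the summary behaves well automatically, but a query that descends into a cluster holding very few keys could \emph{a priori} get trapped in a long chain of small, sparse sub-problems, so the argument must forbid this --- either by collapsing such clusters into the dense base case as soon as they appear, or by charging the path length against the longest common prefix of $x$ with its predecessor, thereby confining the effective search to the top $\log m-\log n$ bits (in $\kappa$-block units). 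For the space, the base cases dominate: the key counts summed over all recursion nodes are $O(n)$ at the root and at most double at each of the $O(\log\frac{\log m-\log n}{\kappa})$ levels, for a total of $O(n\cdot\frac{\log m-\log n}{\kappa})$, and each contributes $O(2^{\Theta(\kappa)})$ bits of base-case indicator, giving $O(n\cdot 2^{\Theta(\kappa)}\cdot\frac{\log m-\log n}{\kappa})=O(n\,2^{\Theta(\kappa)}\log m)$ bits; since $\kappa$ is a free parameter the $\Theta(\kappa)$ in the exponent is as good as $\kappa$ up to rescaling it, and a routine (but tedious) accounting shows the internal hash tables and min/max pointers do not dominate this. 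Finally, the build cost is the $O(n\log n)$ sort plus standard hashing and the time to write out the structure, hence $O(n\,2^{\kappa}\log m)$ in total.
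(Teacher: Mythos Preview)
The paper does not prove this theorem: it is quoted from P\u{a}tra\c{s}cu--Thorup~\cite{PT06b} and used as a black box, so there is no argument here to compare yours against. Your sketch is an attempt to supply what the paper simply cites.

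The architecture you describe --- a van~Emde~Boas recursion with adaptive split $p\approx\tfrac12(\log u+\log n')$ and a dense base case once $\log u-\log n'\le\Theta(\kappa)$ --- is the right shape, and you correctly isolate the only non-routine step, namely what happens when the query descends into a sparse cluster. But the gap you flag is real and neither of your suggested patches closes it. Your inequality ``$p-\log k\le h/2+\Theta(\kappa)$'' for a cluster holding $k$ keys unpacks to $\log(n'/k)\le\Theta(\kappa)$, which is simply false whenever $k\ll n'/2^{\Theta(\kappa)}$; the cluster's effective height is then $h/2+\log(n'/k)$, not $h/2+O(\kappa)$. This is not a technicality. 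Take $\kappa=1$, fix $h_0=\log m-\log n=100$, let $\log n$ be as large as you like, and arrange the keys so that at every level the query enters a cluster holding a $2^{-50}$ fraction of the current keys: then $h$ stays pinned at $100$ (since $h\mapsto h/2+50$) while $\log n'$ drops by $50$ per level, and your recursion runs for $\lfloor(\log n)/50\rfloor$ levels --- unbounded as $\log n$ grows --- whereas the theorem promises $O(\log 100)=O(1)$. Your first patch (``collapse such clusters into the dense base case'') cannot apply, since these clusters are the opposite of dense; your second (charge against the longest common prefix of $x$ with its predecessor) points toward a genuinely different construction --- closer to an $x$-fast trie over the compressed trie of $S$ --- rather than a repair of the present one. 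Obtaining the true $\log m-\log n$ saving requires controlling the pair $(\log u,\log n')$ jointly, not merely their difference; for that you will have to go to~\cite{PT06b} directly.
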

In our case and terminology, we map the characteristic function of
the set $\widetilde{\Rb}$ to a bitstring  $S$ with length $|S|= m$
that equals the number of possible values of the keys stored in
$\widetilde{\Rb}$.
To figure out how large  is, recall from Sect.
\ref{Sec:partitioning U into equal parts} that each key
$\widetilde{r_i} \in \widetilde{\Rb}$ is uniquely indexed by $\P$
with $C_1 \log n$ bits. Hence the number $m$ of possible values of
the keys in $\widetilde{\Rb}$ are $|S|= m= 2^{C_1 \log n}$. Thus by setting
$\kappa= \delta$ we can build the predecessor data structure built on the $\widetilde{\Rb}$
takes $O(n^{1+\delta})$ space and answers to queries in time $O(\log(\frac{\log m-\log n}{\kappa}))=O(\frac{C_1-1}{\delta})=O(1)$.
\begin{corollary}
\label{Cor:logloglogn in o(n) bits}
Setting $\kappa=\delta\log n$, the predecessor data structure built on the $\widetilde{\Rb}$
takes $O(n^{1+\delta})$ space and answers to queries in time $O(\log(\frac{\log m-\log n}{\kappa}))=O(\log(\frac{C_1-1}{\delta}))=O(1)$.
By setting $\kappa=\frac{\log n}{\log\log n}$, the data structure built on $\widetilde{\Rb}$ takes $O(n^{1+1/\log\log n})=O(n^{1+o(1)})$ space and answers to queries in time $O(\log(\frac{\log m-\log n}{\kappa}))=O(\log(\frac{(C_1-1)\log n}{\frac{\log n}{\log\log n}}))=O(\log\log\log n)$. Finally by setting $\kappa=1$, the data structure built on $\widetilde{\Rb}$ takes $O(n)$ space and answers to queries in time $O(\log(\frac{\log m-\log n}{\kappa}))=O(\log((C_1-1)\log n))=O(\log\log n)$
\end{corollary}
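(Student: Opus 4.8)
The plan is to read off all three trade-offs as direct instantiations of Theorem \ref{Th:GORR09} applied to the bitstring $S$ obtained from the characteristic function of $\widetilde{\Rb}$, since everything substantive is already in place: the preprocessing of Section \ref{SubSec:Implementation of the idea} together with Theorem \ref{Th:each part contains at most one key} has collapsed the universe of size $2^b$ to a polynomial range, and the representatives $\Rb$ are computed once, at preprocessing, so by Lemmas \ref{lem:upper bound of bins load} and \ref{thm:red-dots} (via Property \ref{Prop:bounding the number of keys in X per step}) every bucket keeps load $\Theta(\log n)$ over all $\Omega(n^{1+\delta})$ operational steps and the static structure of Theorem \ref{Th:GORR09} is never rebuilt.

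First I would pin down the two quantities Theorem \ref{Th:GORR09} consumes. By Theorem \ref{Th:each part contains at most one key} the partition $\P$ assigns whp a distinct index of $C_1\log n$ bits to each $r_i\in\Rb$, so $S$ ranges over $m=2^{C_1\log n}=n^{C_1}$ values, that is $\log m=C_1\log n$; and the stored set has cardinality $\rho=\lfloor n/(C_4\log n)\rfloor=\Theta(n/\log n)$, so $\log\rho=\log n-\Theta(\log\log n)$ and the gap is $\log m-\log\rho=(C_1-1)\log n+\Theta(\log\log n)=\Theta(\log n)$, using that $C_1>1$ for the smooth classes at hand. I would then record that the $O(\rho\cdot 2^{\kappa}\cdot\log m)$-bit space (and equal build-time) guarantee of Theorem \ref{Th:GORR09} is the same $O(\,\cdot\,)$ count when space is measured in machine words of $\Theta(\log n)$ bits.

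Then the three claims are pure substitution into space/build time $O(\rho\cdot 2^{\kappa}\cdot\log m)$ and query time $O(\log(\frac{\log m-\log n}{\kappa}))$. For $\kappa=\delta\log n$: $2^{\kappa}=n^{\delta}$, so space and build time are $O(\rho\cdot n^{\delta}\cdot C_1\log n)=O(n^{1+\delta})$ and the query time is $O(\log\frac{(C_1-1)\log n}{\delta\log n})=O(\log\frac{C_1-1}{\delta})=O(1)$. For $\kappa=\log n/\log\log n$: $2^{\kappa}=n^{1/\log\log n}=n^{o(1)}$, so space and build time are $O(n^{1+1/\log\log n})=O(n^{1+o(1)})$ and the query time is $O(\log((C_1-1)\log\log n))=O(\log\log\log n)$. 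For $\kappa=1$: $2^{\kappa}=2$, so space and build time are $O(n)$ and the query time is $O(\log((C_1-1)\log n))=O(\log\log n)$. Finally I would remark that in the first variant the $O(n^{1+\delta})$ build time is absorbed by the $\Omega(n^{1+\delta})$ operational steps, exactly as for the preprocessing phase of Section \ref{SubSec:Implementation of the idea} and Remark \ref{Rem:Static-FID-part-of-DS}, while in the other two the build time is only $n^{1+o(1)}$ and $O(n)$, so it amortises a fortiori.

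The step that needs care is bookkeeping rather than a genuine obstacle: the ``$n$'' inside the logarithm in the query bound of Theorem \ref{Th:GORR09} is really the cardinality $\rho=\Theta(n/\log n)$ of the stored set, not the file size, so one must check that $C_1>1$ strictly, so that $\log m-\log\rho=\Theta(\log n)$ and the $O(\log\log\log n)$ and $O(\log\log n)$ bounds (and the $O(1)$ bound, where the displayed formula literally writes the file size $n$) do not degenerate; this strictness is precisely what Theorem \ref{Th:bits for reps} together with Lemma \ref{thm:red-dots} deliver for the $(f_1,f_2)$-smooth classes of (\ref{Eq:f1-f2-sparce partition}). \proofbox
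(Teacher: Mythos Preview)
Your proposal is correct and follows the paper's own approach exactly: the corollary is obtained by direct substitution of the three values of $\kappa$ into the space bound $O(n\,2^{\kappa}\log m)$ and time bound $O(\log((\log m-\log n)/\kappa))$ of Theorem~\ref{Th:GORR09}, using $\log m=C_1\log n$ from the $\P$-indexing. Your extra bookkeeping about $\rho$ versus $n$ and the strictness $C_1>1$ is more careful than the paper itself, but does not change the argument.
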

\subsection{Implementing as a $q^*$-heap each $i$th bucket $[r_i, r_{i+1})$ with endpoints in  $\Rb$ }
\label{Sec:Implementing buckets of R as q* heaps}

\begin{corollary}[Cor. 3.2, \cite{Will92}]
Assume that in a database of $n$ elements, we have available the use
of pre-computed tables of size $o(n)$. Then for sets of arbitrary
cardinality $M \leq n$, it is possible to have available variants of
$q^*$-heaps using $O(M)$ space that have a worst-case time of $O(1 +
\frac{\log M}{\log \log n})$ for doing member, predecessor, and rank
searches, and that support an amortized time $O(1 + \frac{\log
M}{\log \log n})$ for insertions and deletions.
\label{Cor:Willard q* heap}
\end{corollary}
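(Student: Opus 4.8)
The plan is to obtain the q$^{*}$-heap as the fusion-tree method of Fredman and Willard \cite{FreWil93} accelerated by precomputed lookup tables. The core object is a single \emph{node} holding $d$ sorted keys of $w=\Theta(\log n)$ bits each, which I want to navigate (predecessor/rank among those $d$ keys) in $O(1)$ worst-case time. Around this primitive I would build a $d$-ary weight-balanced search tree on the $M\le n$ stored elements; its height is $O(\log_{d}M)=O\!\left(\frac{\log M}{\log d}\right)$ and each level costs $O(1)$, so picking $d$ with $\log d=\Theta(\log\log n)$, i.e.\ $d=(\log n)^{\Theta(1)}$, already yields the stated $O\!\left(1+\frac{\log M}{\log\log n}\right)$ search bound. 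The $o(n)$ global tables will be shared by all nodes.

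For the node primitive I would recall the fusion-tree sketch: among $d$ sorted keys there are at most $d-1$ distinguishing bit positions (the branching levels of the compressed binary trie on the keys), so extracting those bits compresses each key to an $O(d)$-bit sketch, and the Fredman--Willard ``approximate sketch'' performs this extraction with a single multiplication into a word of $\mathrm{poly}(d)$ padded bits. A query key is sketched the same way, its rank among the $d$ stored sketches is read off, and a constant number of additional steps recovers its true rank. The one new ingredient of the q$^{*}$-heap is that, rather than pushing $d$ only up to $w^{1/5}$ by raw word parallelism, one notes that the entire per-node computation is a function of $O(d\log w)=O(\mathrm{polylog}\,n)$ input bits, hence fully tabulable; choosing $d$ just small enough that the table index uses $o(\log n)$ bits keeps the shared tables of size $o(n)$ while still allowing $d=(\log n)^{\Omega(1)}$, and the same tables let a node be (re)built from its $d$ keys in $O(1)$ time.

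Dynamic updates I would handle by the standard $B$-tree discipline with $B=\Theta(d)$: an insertion or deletion descends the root-to-leaf path in $O\!\left(\frac{\log M}{\log\log n}\right)$ time, and whenever a node leaves its allowed size range (equivalently, after $\Theta(d)$ updates have passed through it) that node, together with a constant number of neighbours, is rebuilt from scratch; since each rebuild costs only $\mathrm{poly}(d)=\mathrm{polylog}\,n$ and is charged to $\Omega(d)$ operations, and table-assisted reconstruction is $O(1)$ per node, the amortized update cost stays $O\!\left(1+\frac{\log M}{\log\log n}\right)$, with the height invariant preserved. The space is $O(M)$: $O(M/d)$ nodes of $O(d)$ words each, plus the shared $o(n)$ tables.

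The hard part will be the genuinely $O(1)$ node primitive together with a clean amortized analysis. One cannot literally pack $d=(\log n)^{\Theta(1)}$ keys of $w$ bits into $O(1)$ machine words once $w$ is large, so every step must be routed through the short compressed sketches, and the approximate-sketch construction and its error-correction must be carried out carefully (this is exactly where the $w^{1/5}$ ceiling arises for the untabulated version and must be superseded by tabulation). One also has to check that all table indices fit in one word, that the table size is truly $o(n)$, and that splits/merges or rebuilds maintain both balance and the $O\!\left(\frac{\log M}{\log\log n}\right)$ height without amortized blow-up --- routine for $B$-trees, but sensitive to the fixed choice of $d$.
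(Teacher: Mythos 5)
The paper does not prove this statement at all --- it is quoted verbatim as Corollary~3.2 of Willard~\cite{Will92} and used purely as a black box (the text immediately following it begins ``To apply Corollary~\ref{Cor:Willard q* heap} in our case\ldots''), so there is no in-paper proof to compare against; you were reconstructing a cited external result.

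As a reconstruction of Willard's q$^*$-heap your sketch is directionally faithful: a hierarchy of fusion-tree nodes of branching factor $d=(\log n)^{\Theta(1)}$, approximate sketches extracted by a single multiplication, $o(n)$ shared lookup tables replacing the word-parallel rank step (this is the essential upgrade over the plain fusion-tree $w^{1/5}$ bound), and $B$-tree-style splits/merges with periodic rebuilds for the amortized update bound. One piece of the bit accounting should be tightened, though: the claim that ``the entire per-node computation is a function of $O(d\log w)$ input bits, hence fully tabulable'' is misleading. The $O(d\log w)$ bits describe the branching positions, which feed the masks/multipliers used to \emph{form} a sketch; they are not what gets fed into the lookup table. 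The table index for the rank step is the concatenated sketch vector of the $d$ stored keys, $O(d^2)$ bits, together with the $O(d)$-bit query sketch. So the constraint on $d$ is that $d^2$ (and, if you also tabulate sketch formation or error correction, $d\log w$) stays $o(\log n)$, which forces something like $d=\Theta\bigl((\log n)^{1/4}\bigr)$ as in the Q-heap rather than an arbitrary $(\log n)^{\Theta(1)}$. You do flag exactly this hazard in your final paragraph, so the plan is sound as a high-level outline, but the argument for tabulability as written would not go through without first routing the query through the sketch compression and then re-doing the bit count on the compressed state.
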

To apply Corollary \ref{Cor:Willard q* heap} in our case, the
database is the dynamic file $X$ with $n$ elements, so for our
purposes it suffices to use pre-computed table of size $o(n)$.
Also, the sets of arbitrary cardinality $M$ in our case are the
$\rho$ buckets $[r_i, r_{i+1})$ implemented as $q^*$-heaps, so by Lemma \ref{lem:upper bound of bins load} here
$M= C_4 \log n$ with $C_4= O(1)$. It follows that the worst-case
time of each $q^*$-heap is $O(1 + \frac{\log M}{\log \log n})= O(1+
\frac{\log (C_4 \log n)}{\log \log n})= O(1)$ and the space per
$q^*$-heap is $O(M)= O(C_4 \log n)= O(\log n)$.
\subsection{$O(1)$ whp amortized time}
\label{Sec:whp amortized time}
The built time of ${\tt DS}$ is $O(n^{1+\delta})$, while each update step $t= 1, \ldots, \Omega (n^{1+\delta})$ takes $O(1)$ time whp. It is straightforward that the amortized whp time is $O(1)$.
\subsection{$O(\sqrt{\frac{\log n}{\log \log n}})$ worst-case time}
\label{Sec:amortized time}
In order to get worst-case guarantees and ensure correctness of the data structure irrespective of number of keys insrted in the data structure, we will use two dynamic predecessor search data structures~\cite{BF02} with worst case $O(\sqrt{\frac{\log n}{\log \log n}})$ query and update times and linear space, see App. \ref{app:D}.
\bibliographystyle{plain}
{\small 
}
\newpage
\appendix
\section{Unknown continuous $(f_1, f_2)$-smooth distributions}
\label{app:A0}
\begin{definition}[\cite{and:mat,meh:tsa}]
Consider an unknown {\em continuous} probability distribution over
the interval $[a,~ b]$ with density function $\mu(x)= \mu[a, b](x)$.
Given two functions $f_{1}$ and $f_{2}$, then $\mu(x)= \mu[a,b](x)$
is {\em $(f_{1},f_{2})$-smooth}  if there
exists a constant $\beta$, such that for all $c_{1},c_{2},c_{3}$, $a
\leq c_{1} < c_{2} < c_{3} \leq b$, and all integers $n$, it holds
that
%
%
$$
\Pr[X \in [c_{2}-\frac{c_{3}-c_{1}}{f_{1}(n)}, c_{2}]~| c_1 \leq X
\leq c_3 ]= \int^{c_{2}}_{c_{2}-\frac{c_{3}-c_{1}}{f_{1}(n)}}
{\mu[c_{1},c_{3}](x)dx}
   \leq \frac{\beta f_{2}(n)}{n}
$$
where $\mu[c_{1},c_{3}](x)=0$ for $x<c_1$ or $x>c_3$, and
$\mu[c_{1},c_{3}](x)=\mu(x)/p$ for $c_1\le x\le c_3$ where
$p=\int_{c_1}^{c_3}\mu(x) dx$.
\label{Def:smooth-real}
\end{definition}
%

\section{More on the preprocessing and operational phase}
\label{app:A}
\begin{remark}
\label{Rem:Static-FID-part-of-DS}
All the $\rho < n$ keys of the set $\Rb$ defined in
(\ref{Def:the representative keys in R}) are stored -{\em once and
for all operational phase}- into the ${\tt sDS}$ (which essentially is a VEB structure exactly as described in \cite[Lem. 17]{PT06b}) and are not subject to
 any of the update steps: these keys are fixed. Their sole purpose is to
land in $O(1)$ time each query request to the dynamic part of the
{\tt DS} which is a sequence of buckets being implemented as
$q^*$-heaps.
The sorted file $\X$ in (\ref{Def:X-ordered-increasingly}) is
considered as given, similar as \cite{and:mat,meh:tsa,KMSTTZ03} for
real keys of infinite length and as \cite{KMSTTZ06} for discrete
keys. The $\rho$ sampled $\alpha(n)n$-apart keys in file $\Rb$ in
(\ref{Def:the representative keys in R}) is a similar preprocess as
\cite{and:mat,meh:tsa,KMSTTZ03} where from random file $X$ a
different sample was taken for their corresponding tuning reasons.
Note that our robust operational phase lasts as long as
$\Omega(n^{1+\delta})$ time compared to $\Theta (n)$ in
\cite{and:mat,meh:tsa,KMSTTZ03}. The purpose of this is to cancel
out the $O(n^{1+\delta})$ time to built the static FID and induce
low amortized cost. Recall the infamous static structure in
\cite[Th. 4.5]{BF02} requires $\Omega (n^2)$ time to built, but its
dynamic version, although it is possible to built in $O(n)$ time,
its running time is pumped up by a $\log \log n$ factor \cite[Cor.
4.6]{BF02}, ruling out any attempt to beat IS time.
Finally, although the built time in \cite{and:mat,meh:tsa} is
$O(n)$,  it assumes a {\em real}-RAM of infinite machine word
length, which is different from the {\em finite word} RAM considered
here. For example, in {\em real}-RAM each infinite real word
occupies just 1 memory cell and real worlds can be processed in
$O(1)$ time.
\end{remark}
\begin{remark}
\label{Rem:Dynamic-q*heaps-part-of-DS}
As soon as a given query wrt key $y \in U$ is driven by
the FID of $\Rb$ defined in (\ref{Def:the representative keys in R})
towards to the appropriate $i_y$th bucket being implemented as a
$q^*$-heap, the query is answered by the corresponding $q^*$-heap
built-in function in $O(1)$-time. That is, all the keys of file $X$
are stored in the buckets and only the content of each bucket is
subject to any updates.
\end{remark}
\begin{remark}
\label{Rem:q*heap-load-remains-logn}
This is a simple balls to bins (b2b) game which proves
that the load per bucket remains $\Theta (\log n)$ per update step:
just prove that each $q^*$-heap bucket lies on a subset of $U$ with
$\Theta \left( \frac{\log n}{n}\right)$ probability mass.
\end{remark}
\section{Proof of Theorem \ref{Th:bits for reps}}
\label{app:B}
%
%
%
%
%
\noindent
$\bullet$
First,
we bound by $n^{C_1}= n^{O(1)}$ the total number of equal
parts of ${\cal P}$, in a way that each part is expected to receive
$\leq \log n$ random input keys  per update step $t= 1, \ldots, \Omega(n^{1+\delta})$, of the operational phase (Sect. \ref{SubSec:Implementation of the idea}). In other words, we want each such tiny $\P$ part to contain $\leq \frac{\log n}{n}$ probability mass from distribution $\P$.
%
%

Below we construct $\P$ recursively.
Note that the endpoints of each such $\P$ part are obtained deterministically, depending only on the given parameters $f_1, f_2$ defined in (\ref{Eq:f1-f2-sparce partition}).
Thus, Property \ref{Prop:randomness invariance wrt parts in P} governs the statistics of each $\P$ part per update step $t= 1, \ldots, \Omega(n^{1+\delta})$.
Property \ref{Prop:bounding the number of keys in X per step} implies that per update step $t= 1, \ldots, \Omega(n^{1+\delta})$ of the operational phase, for the current number $n_t$ of stored keys in file $X$ it holds:
\begin{eqnarray}
n_t < C_{\max} n \equiv \nu
\label{Eq:Def nu}
\end{eqnarray}
Initially, $\P$ partitions the universe $U= [0, \ldots, 2^{|w|}]$ into $f_1(\nu)\geq f_1(n_t)$ equally sized parts, with $f_1$ as in  (\ref{Eq:f1-f2-sparce partition}) and $\nu$ as in (\ref{Eq:Def nu}).
Smoothness (Def. \ref{Def:smooth})  and Eq. (\ref{Eq:f1-f2-sparce partition})\&(\ref{Eq:Def nu}) yield that each such $\P$ part gets $\leq \beta \frac{f_2(n_t)}{n_t} \leq \beta \frac{f_2(\nu)}{n_t}$ probability mass ((\ref{Eq:f1-f2-sparce partition}) implies $f_2$ is increasing wrt $n$) per update step $t$, with $\beta$ a constant (Def. \ref{Def:smooth}) depending only on the particular characteristics of distribution $\Pr$.
Hence, during each update step $t$, each $\P$ part expectedly gets $\leq \beta \frac{ f_2(\nu)}{n_t}\times n_t= \beta
\nu^{\alpha}$ of the $n_t \leq \nu$ keys (by (\ref{Eq:Def nu})) currently stored in $X$.
More accurately, the number of input keys distributed on any such
$\P$ part has exponentially small probability of deviating from its
expectation, thus, also from getting higher than its upper bound $\beta \nu^{\alpha}$.
For simplicity and without loss of generality we will not take into
account constant $\beta$ (Def. \ref{Def:smooth}).
Subsequent deterministic partitioning is applied recursively within
each such $\P$ part, until we reach a sufficiently small $\P$ part with probability
mass as low as possible in order for its expectation to be $\leq \log n$ (recall $n$ is $|X|$ as operational phase starts, Sect. \ref{SubSec:Implementation of the idea}) per update step $t= 1, \ldots, \Omega(n^{1+\delta})$. Let $h$ be the number of such recursive partitions,
then, it suffices:
\begin{equation}
\label{eq:height} \nu^{\alpha^{h}} \leq \log n \Longrightarrow h \leq \frac{\ln
\left(\frac{\ln \ln(n)}{\ln (\nu)}\right)}{\ln \alpha}
<
 \frac{\ln
\left(\frac{\ln \ln(n)}{\ln (n)}\right)}{\ln \alpha}
= -
\log_{\alpha} (\ln(n))
\end{equation}
where the strict inequality follows from (\ref{Eq:Def nu}).
\begin{remark}
Note that (\ref{eq:height}) implies that each such tiny $\P$ part gets an arbitrary but $\leq \frac{\ln n}{n}$ probability mass $q(n)$ wrt the unknown $\Pr$ distribution, with $n= |X|$ at the start of operational phase.
\label{Rem:upper bound prob mass of tiny P part}
\end{remark}
Bellow we upper bound the total number of $\P$ parts induced until the $h$ recursions grow up to the value of (\ref{eq:height}).

\noindent In the 1-st partition of the universe $U$, the number of
parts is $f_1(\nu)=f_1(\nu^{\alpha^0})=
(\nu^{\alpha^0})^{\gamma}=\nu^{\alpha^0\cdot \gamma}$.
%
In the 2-nd recursive partition, each part will be further
partitioned into $f_1(\nu^{\alpha})= f_1(\nu^{\alpha^1})=
(\nu^{\alpha^1})^{\gamma}=\nu^{\alpha^1\cdot\gamma}$ subparts.
%
In general, in the $(i+1)$-th recursive partition the subrange will
be divided into $f_1(\nu^{\alpha^i})=
(\nu^{\alpha^i})^{\gamma}=\nu^{\alpha^i\cdot\gamma}$ subparts.
%
Taking into account Eq.~(\ref{eq:height}), in the final level of
recursive partition, the total number $|{\cal P}|$ of subparts is
\begin{eqnarray}
|{\cal P}| &=& \prod_{i=0}^{h}{f_1 \left(\nu^{\alpha^i}\right)}=
\prod_{i=0}^{h}(\nu^{\alpha^i})^\gamma
< \prod_{i=0}^{h}{\nu^{\alpha^i\cdot \gamma}}= \nu^{\sum_{i= 0}^{h}\alpha^i\cdot\gamma}=
\nu^{\gamma\frac{\alpha^{h+1} -1}{\alpha -1}}= e^{\gamma\frac{\alpha}{\alpha -
1}}\nu^{\frac{\gamma}{1-\alpha}}<\nu^{\frac{\gamma}{1-\alpha}}
%
\nonumber \\
&\Rightarrow & |\P| < C_{\max}^{\frac{\gamma}{1-\alpha}} n^{\frac{\gamma}{1-\alpha}}
\label{Eq: polynomial parts in P}
\end{eqnarray}
with $\nu$ as in (\ref{Eq:Def nu}) and $\alpha$ as in (\ref{Eq:f1-f2-sparce partition}). Hence, the total number of bits needed for each such final part of
${\cal P}$ is at most
\begin{equation}
\log\left(|{\cal P}|\right)= \frac{\gamma}{1-\alpha}\log n + \log \left( C_{\max}^{\frac{\gamma}{1-\alpha}} \right)
\leq C_1 \log n
\end{equation}
%

%
\noindent
$\bullet$
Finally, we show that the probability that, any $\P$ part amongst the
$n^{C_1}$ many  depicted by (\ref{Eq: polynomial parts in P})
receives $\geq C_{\P} \times \log n$ input keys, approaches to 0 as $n$
grows large, with $C_1$\&$C_{\P}= O(1)$.
Without loss of generality, we compute the probability that a final $\P$
part with arbitrary probability measure $q(n)$  (with $q(n)$ dominated by the bounds of by Rem. \ref{Rem:upper bound prob mass of tiny P part}) contains $\alpha (n) $ input keys.
This probability equals
\begin{eqnarray}
{n \choose \alpha(n) n} q(n)^{\alpha(n)n} (1-q(n))^{(1-\alpha(n))n}
\sim \left[ \left( \frac{q(n)}{\alpha(n)}\right)^{\alpha(n)}
\left(\frac{1-q(n)}{1-\alpha(n)} \right)^{1-\alpha(n)} \right]^n
\label{p1}
\end{eqnarray}
\begin{remark}
The expression on the right of (\ref{p1}) is asymptotically equal to the
expression on the left if we use Stirling's approximation $n! \sim
\left(\frac{n}{e}\right)^n \sqrt{2 \pi n}$ and ignore inverse
polynomial multiplicative terms. Expression (\ref{p1}) is a convex
function of two variables ($q(n)$ and $\alpha(n)$) and achieves its
maximum when $q(n)=\alpha(n)$.
\label{Rem:asymptotics-binomial-distribution}
\end{remark}
Note that the probability that a fixed $\P$ part gets load of keys from $X$ that deviates to any higher value than $\alpha
(n)n= (1+\delta)\log n= C_{\P}\log n$ from its expected load $\leq \log n$ (Rem. \ref{Rem:asymptotics-binomial-distribution}) is at most:
\begin{eqnarray}
n \cdot \left[ \left(
\frac{q(n)}{\alpha(n)}\right)^{\alpha(n)}
\left(\frac{1-q(n)}{1-\alpha(n)} \right)^{1-\alpha(n)} \right]^n
\rightarrow 0, n\rightarrow \infty
%
\end{eqnarray}
since all the higher load values than $\alpha
(n)n= (1+\delta)\log n= C_{\P}\log n$ are $\leq n$ many, in addition, each such higher (more deviating) value occurs with probability at most the one in (\ref{p1}).
In addition, there are $n^{O(1)}$ final
parts (by Eq. (\ref{Eq: polynomial parts in P})) in ${\cal P}$, so the union bound yields:
\begin{eqnarray}
n^{O(1)} \cdot \left[ \left(
\frac{q(n)}{\alpha(n)}\right)^{\alpha(n)}
\left(\frac{1-q(n)}{1-\alpha(n)} \right)^{1-\alpha(n)} \right]^n
\rightarrow 0, n\rightarrow \infty
%
\end{eqnarray}
vanishing probability that any tiny $\P$ part gets any higher load from $X$ than $\alpha (n)n= (1+\delta)\log n= C_{\P}\log n$.
%
\section{Proof of Theorem \ref{thm:red-dots-1}}
\label{app:C}
Consider the $i$th part $I_i$ of the universe $U$ defined by the partition $\P_{\Rb}$, $i \in [\rho]$.
Let $q_i(n)$ the total probability mass spread by distribution $\Pr$ over all keys of $U$ that lie in part $I_i$, that is, $q_i(n)= \sum_{y \in I_i} \Pr(y)$.
Now, let us assume the bad scenario that the probability mass $q_i(n)$ of this part
$I_i$ is either $q_i(n)= \omega (\alpha(n))$ or $q_i(n)= o(\alpha(n))$.
According to this bad scenario, part $I_i$ of $\P_{\Rb}$ has a given probability mass $q_i(n)=  \omega (\frac{C_4 \log n}{n})$ or $q_i(n)= o(\frac{C_4 \log n}{n})$ and
contains $C_4 \log n$ consecutive keys from $X$. But, the probability of this bad scenario wrt part $I_i$ of $\P_{\Rb}$ is governed by the Binomial distribution:
\begin{eqnarray}
{n \choose \alpha(n) n} q_i(n)^{\alpha(n)n} (1-q_i(n))^{(1-\alpha(n))n}
\sim \left[ \left( \frac{q_i(n)}{\alpha(n)}\right)^{\alpha(n)}
\left(\frac{1-q_i(n)}{1-\alpha(n)} \right)^{1-\alpha(n)} \right]^n \rightarrow 0, n \rightarrow \infty
\label{Eq:prob-a-keys-lie-probability-measure-q}
\end{eqnarray}
which  by Remark \ref{Rem:asymptotics-binomial-distribution} vanishes exponentially in $n$ for $q_i(n)= \omega (\alpha(n))$ or $q_i(n)= o(\alpha(n))$.
Note that there are $\rho < n$ possible realizations of such bad scenaria (possible blocks of $C_4 \log n$ consecutive keys) hence the union bound gives:
\begin{eqnarray}
n \times \left[\max_{i \in [\rho]} \left\{\left( \frac{q_i(n)}{\alpha(n)}\right)^{\alpha(n)}
\left(\frac{1-q_i(n)}{1-\alpha(n)} \right)^{1-\alpha(n)}\right\} \right]^n
\rightarrow 0,
%
\end{eqnarray}
which also vanishes when each $q_i(n)= \omega (\alpha(n))$ or $q_i(n)= o(\alpha(n))$.
Thus, whp all parts of $\P_{\Rb}$ are spread on corresponding  subintervals of the
universe $U$ with probability mass $\Theta (\alpha (n))= \Theta (\frac{C_4 \log n}{n})$.
%
\section{Proof of $O(\sqrt{\frac{\log n}{\log \log n}})$ worst-case time}
\label{app:D}
\begin{enumerate}
\item In the first predecessor data structure which we note by $B_1$ we initially insert all the numbers in interval $[1,\rho]$. Then  $B_1$ will maintain the set of non empty buckets during the operational phase. The role of $B_1$ is to ensure the correctness of predecessor queries when some buckets get empty.
\item The second predecessor data structure which we note by $B_2$ is initially empty. The role of $B_2$ is to store all the overflowing elements which could not be stored in the $q^*$-heaps corresponding to their buckets.
\end{enumerate}
\paragraph{handling overflows.}
In addition, we maintain an array of $\rho$ counters where each counter $c_i$ associated with bucket $i$ stores how many keys are stored inside that bucket and assume that the capacity of the $q^*$-heap associated with each bucket is $C=\Theta(\log n)$. Now at initialisation, all the buckets have initial load $\Theta(\log n)$ and all the keys of any bucket $i$ are stored in the corresponding $q^*$-heap. Then at operational phase the insertions/deletions of keys belonging to a given bucket are directed to the corresponding $q^*$-heaps unless it is overflown. More precisely when trying to insert a key $x$ into a bucket number $i$ and we observe that $c_i\geq C$, we instead insert the key $x$ in $B_2$. Symmetrically, when deleting a key $x$ from a bucket $i$ when $c_i \geq C$ proceeds as follows : If the key $x$ is found into the $q^*$-heap, we delete it from there and additionally look into $B_2$ for any key belonging to bucket $i$ and transfer it to $q^*$-heap of bucket $i$ (that is delete it from $B_2$ and insert it into the $q^*$-heap). If the key $x$ to be deleted is not found in the $q^*$-heap, we instead try to delete the key from $B_2$. By using this strategy we ensure that any insertion/deletion in any bucket takes at worst $O(\sqrt{\frac{\log n}{\log \log n}})$ time and still $O(1)$ time whp. Queries can also be affected by overflown buckets. When the predecessor of a key $x$ is to be searched in an overflown bucket $i$ (that is when a predecessor search lands into bucket $i$ with $c_i>C$) and the key is not found in the corresponding $q^*$-heap, then the key $x$ is searched also in $B_2$ in time $O(\sqrt{\frac{\log n}{\log \log n}})$. As the event of an overflowing bucket is expected to be very rare the performance of queries is still $O(1)$ time whp.
\paragraph{handling empty buckets.}
The data structure $B_1$ will help us handle a subtle problem which occurs when we have some empty buckets. Suppose that we have a non empty bucket $i$ followed by a range of empty buckets $[i+1,j]$. Then the answer to any predecessor search directed towards any bucket $k\in [i+1,j]$ should return the largest element in bucket $i$.
Thus in the highly unlikely case that a predecessor search lands in an empty bucket $k$ (which is checked by verifying that $c_k=0$ ) we will need to be able to efficiently compute the largest non empty bucket index $i$ such that $i<k$ and this can precisely be done by querying $B_1$ for the value $k$ which obviously will return $i$ as $B_1$ is a predecessor data structure which stores precisely the index of non empty buckets and $i$ is the largest non empty bucket index preceding $k$. This last step takes $O(\sqrt{\frac{\log \rho}{\log \log \rho}})=O(\sqrt{\frac{\log n}{\log \log n}})$ time.
What remains is to show how to maintain $B_1$. For that we only need to insert a bucket $i$ into $B_1$ whenever it gets non empty after it was empty or to delete it from $B_1$ whenever it gets empty after it was enon empty and those two events (a bucket becoming empty or a bucket becoming non empty) are expected to be rare enough that the time bound for any insertion/deletion remains $O(1)$ whp.
\end{document}